\newcommand*\samethanks[1][\value{footnote}]{\footnotemark[#1]}
\title{Group Fairness and Multi-criteria Optimization in School Assignment}
\author{
  Santhini K.~A. \\
  Indian Institute of Technology, Madras, India\\
  GPM Government College Manjeshwar, India\\
  \texttt{cs18d013@cse.iitm.ac.in} \\
  \and
 Kamesh Munagala \thanks{Supported by NSF grants CCF-2113798 and IIS-2402823.}\\
  Duke University, USA\\
  \texttt{kamesh@cs.duke.edu} \\
  \and
  Meghana Nasre\\
  Indian Institute of Technology, Madras, India\\
  \texttt{meghana@cse.iitm.ac.in}
  \and
  Govind S. Sankar\samethanks[1] \\
  Duke University, USA \\
  \texttt{govind.subash.sankar@duke.edu} 
}
\newif\iffullversion
\newtheorem{theorem}{Theorem}
\newtheorem{claim}[theorem]{Claim}
\newtheorem{lemma}[theorem]{Lemma}
 \newtheorem{corollary}[theorem]{Corollary}
\begin{document}

\maketitle

\begin{abstract}
We consider the problem of assigning students to schools when students have different utilities for schools and schools have limited capacities. The students belong to demographic groups, and fairness over these groups is captured either by concave objectives, or additional constraints on the utility of the groups. We present approximation algorithms for this assignment problem with group fairness via convex program rounding. These algorithms achieve various trade-offs between capacity violation and running time.  We also show that our techniques easily extend to the setting where there are arbitrary  constraints on the feasible assignment, capturing multi-criteria optimization. We present simulation results that demonstrate that the rounding methods are practical even on large problem instances, with the empirical capacity violation being much better than the theoretical bounds.
\end{abstract}

\section{Introduction}
Several societal decision-making problems manifest as assignment or matching. This includes the well-known school assignment or school redistricting problems, variants of which are implemented in several cities, including New York~\cite{abdulkadirouglu2003school, abdulkadirogu_nyc}, Boston~\cite{Peng}, and San Francisco~\cite{allman_sf_diversity}. Typically, students express preferences over schools, and schools have {\em priorities} over different types of students and a fixed capacity to accept students. This assignment can then be modeled as a matching problem over students and school seats with one or two-sided preferences. 

A standard solution approach is to find a stable
matching \cite{schoolchoice,abdulkadirouglu2003school} but often,
legislative or policy objectives require the problem to be augmented with additional features such as quotas and demographic requirements on the student body selected~\cite{abdulkadiroglu2011generalized,fragiadakis_quotas,India-match}.

While many mechanisms for school choice attempt to reconcile these policy requirements with desirable traits like strategy-proofness or stability, we turn our focus towards a different consideration. Instead of designing mechanisms to achieve these traits, we examine schools purely as resources to be allocated fairly among the students.
We consider a viewpoint from the perspective of the student demographics, such as location, race, income, or parental education level, and seek a matching that is fair to these demographic groups. 
We adopt a model of \textit{cardinal} preferences for students, where there is a numerical value to the utility that a student receives from being assigned to a school. This can capture outcomes such as average travel distance or assignment to higher-ranked schools.
In such settings, the students are not strategic. Furthermore, concerns about stability may not be relevant when schools do not have preferences, or when demographic fairness is the primary goal.
For example, as Abdulkadiro{\u{g}}lu and  S{\"o}nmez \cite{abdulkadiroglu2011generalized} note,
\begin{quote}
``During the redesign of the admissions process, BPS [Boston Public Schools] and the public considered the option of violating priorities at regular schools to promote student welfare. Likewise, the New York City high school system involves some schools at which respecting priorities emerges as a major policy goal and some other schools where priority violations may not be a cause of concern.''    
\end{quote}


Analogous to previous work on cardinal preferences \cite{allman_sf_diversity, ashlagi_cardinal_optimalallocation, shi_cardinal_optimalpriority}, we guarantee the existence of assignments that are fair to various demographic groups at the cost of adding a small number of extra seats to schools. Our model is general and allows for a variety of fairness objectives. 
We are required to assign students to schools subject to: (1) matching every student, and (2) being fair on the utilities to a pre-defined set of $g$ (potentially overlapping) demographic groups of students while (3) respecting the capacities of schools as much as possible. 
As mentioned before, these groups can capture attributes like race, location, parental income, etc. Each student has cardinal utilities over schools, and the group fairness could either be captured by an objective defined over the total utility obtained by each group, or as a set of constraints capturing the same. We note that though we have presented school assignment as a canonical application, the assignment problem we consider is very general and has many applications such as job assignment or course assignment. 

By now viewing the demographic groups as players in a fair division problem, where the schools are the resource to be allocated fairly, we can optimize for objectives such as minimum welfare or proportionality. Our main result (\cref{thm:main2}) is an algorithm to find an \textit{exactly} proportional allocation, by adding $O(g^2)$ extra seats to the schools, where $g$ is the number of demographic groups. Since $g$ is small in practice, this is a mild violation. 

Our model is similar to the one recently introduced in~\cite{PTF}.
However, we significantly extend their results at the cost of a slightly larger capacity violation. Firstly, in their model, the demographic groups are required to form a partition of the students. We have no such restriction, and prove our results under the assumption that demographic groups can overlap arbitrarily - which is the case in practice when considering unrelated features such as race, sex, nationality, etc.
Second, they assume that all students value schools in the same way. That is, there is a single, global utility function.
Such an assumption is also not based in practice, where the student preferences may be globally correlated but exhibit large individual variations.
Finally, they achieve only \textit{approximate} proportionality, and cannot even guarantee that the allocation is Pareto-optimal on utilities. In contrast, our algorithms guarantee not only \textit{exact} proportionality, but a host of other fairness measures such as maximin fairness and Nash welfare.
Compared to their $O(g\log g)$ violation in total capacities, we violate capacities by a marginally larger amount ($O(g^2)$) but solve a much larger class of problems.
To the best of our knowledge, our work is the first to achieve not just exact proportionality, but any concave fairness objective, in this setting while violating the school capacities by a small amount.

Lastly, we remark that our model can capture school-side preferences, albeit not to the extent of stability.
Suppose that a school values each student across $h$ different axes such as academics, sports, co-curriculars, etc.
Then we can write $h$ constraints for each school, to ensure that the student body assigned to the school is valued at at least (or at most)
a certain threshold in each axes. In general, our framework can model such multi-objective optimization by incurring additive violations in the capacities as well as the constraints.
In the school example, this would entail violating the capacities by an additive $O(hm)$, where $m$ is the number of schools, and a similar factor in the constraints. Since $m\ll n$, the number of students, in most practical settings, this is still acceptable.

\subsection{Model}\label{sec:prelim}

Formally, there is a set $S$ of $n$ students divided into $g$ possibly overlapping groups $S_1, S_2, \ldots, S_g$. There is a set $T$ of $m$ schools, and school $j\in T$ has capacity $C_j$.
There is also a bipartite graph $G=(S\cup T,E)$, where $(i,j)\in E$ is an edge if for $i\in S, j\in T$, it is possible to assign student $i$ to school $j$.
Let $\vec{y}$ denote an assignment, where $y_{e} \in \{0,1\}$ denotes whether for edge $e=(i,j)\in E$, student $i \in S$ is assigned to school $j \in T$. This assignment is \textit{feasible} if it satisfies the capacity constraint of each school, and each student is assigned to some school\footnote{Note that this is without loss of generality, since we can always add a dummy school with infinite capacity to ensure that every student can be matched. However, this naturally affects the fairness objective. For example, the proportional share of demographic groups can increase from this process. }.
Let $u_{ij}$ be the (non-negative) utility derived by student $i$, if assigned to school $j$. 

Given an assignment $\vec{y} \in \mathcal{P}$, we define the utility of each group $S_k$ as $U_k(\vec{y}) = \sum_{i \in S_k} \sum_{(i,j)\in E} u_{ij} y_{(i,j)}.$ Let $\vec{U} = \langle U_1, U_2, \ldots, U_g \rangle$. 
The goal is to find a feasible assignment $\vec{y}$  that maximizes some fairness function on the utilities perceived by the $g$ groups. Let $f(\cdot)$ be a non-decreasing concave function. Then, a general goal is to maximize
$$ h(\vec{U}) = \sum_{k=1}^g f\left(U_k(\vec{y})\right).$$

In practice, several functions $f$ can be reasonable. For instance, the celebrated {\em Nash welfare} objective sets $f = \log$, and the optimal solution $\vec{U^*}$ satisfies the following relation: For any other feasible utility vector $\vec{U'}$, we have the relation:
\begin{align}
\label{eq:prop}
  &\hfill &  \frac{1}{g} \sum_{k=1}^g \frac{U'_k}{U^*_k} \le 1.
\end{align}     
The utilities $\vec{U^*}$ in this allocation are  proportional for each of the groups, that is, $U^*_k \ge  \frac{U'_k}{g}$, meaning each group gets at least $1/g$ of the utility it would have obtained had it been the only group in the system and the social welfare was maximized. This notion of proportionality is the objective considered in~\cite{PTF}. Note that Nash welfare is a much more general objective, capturing proportionality for various subsets of groups, and the approach in~\cite{PTF} does not extend to Nash welfare.  

Other examples of utility functions that we can model  include the \textbf{max-min} fairness objective, which maximizes 
$$\min_{k=1}^g U_k$$
and tries to make the least happy group as happy as possible.
This can be modeled with simple linear constraints without needing the use of a convex function.
Any objective over the utilities that has \textbf{Constant Elasticity of Substitution} (CES) with certain parameters also falls in this model. Such functions can be modeled as maximizing
$$
\sum_{k=1}^g a_k (U_k)^r.
$$
for non-negative real numbers $a_k$.
When $r\leq 1$, this function is concave and can be maximized in our framework. Note that the $r>1$ case models an inherently \textit{unfair} allocation, since it is always better to allocate more utility to the best-off group.

Finally, we can also capture group fairness through arbitrary covering or packing constraints where we are explicitly given utility requirements for each group. The objective is to find an assignment where each constraint is satisfied.
This can capture general \textit{multi-criteria optimization} for assignment problems in a non-fairness context. 
Formally, given $g$ arbitrary constraints,
we obtain a solution that can violate the capacities by a small function of $g$, and satisfies the constraints up to an additive function of $g$ and $\abs{q_{\max}}$, the largest magnitude coefficient in the constraint matrix.
We discuss this further below.

\subsection{Our Results} \label{sec:results}

Our main contribution is a set of polynomial-time approximation algorithms for this problem for arbitrary fairness objectives. In \cref{sec:proof}, we present two approximation algorithms based on rounding a natural convex programming relaxation, which yield somewhat different guarantees, as summarized in the theorem below.

\begin{theorem}
\label{thm:main}
Given any monotone, concave fairness function $f$, let $\vec{U^*}$ be the utilities in the optimal solution. Then, there exist algorithms to compute an assignment $\vec{y}$ that satisfies relaxed school capacities $\vec{C'}$ and yields utilities $\vec{U'}$ with $U'_k \ge U^*_k$ for all groups $k \in \{1,2,\ldots, g\}$, with one of the following guarantees:

\begin{enumerate}
\item A polynomial\footnote{Throughout the paper, we use this to mean polynomial in $n,m,$ and $g$.} running time and satisfies $C'_j \le C_j + 1 + \delta_j$, where $\sum_j \delta_j \le 2g$.

\item A $n^{O(g)}$ running time and satisfies $C'_j \le C_j + \delta_j$ with $\sum_j \delta_j = O(g^2)$.
\end{enumerate}
\end{theorem}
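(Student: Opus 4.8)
The plan is to relax the problem to a convex program, freeze the optimal group utilities as linear side constraints, and then round a vertex of the resulting assignment polytope while treating the $g$ utility constraints as the only obstruction to an otherwise clean bipartite rounding. Concretely, I would first solve the relaxation
\[
\max \sum_{k=1}^g f(U_k) \ \text{ s.t. } \ \sum_{j:(i,j)\in E} y_{ij}=1\ \forall i,\quad \sum_{i:(i,j)\in E} y_{ij}\le C_j\ \forall j,\quad U_k=\sum_{i\in S_k}\sum_{(i,j)\in E} u_{ij}y_{ij},\quad y\ge 0,
\]
which is a concave maximization over a polytope and hence solvable in polynomial time; let $\vec{U^*}$ be its optimal utilities. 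I would then discard the objective and work with the feasibility polytope $Q$ defined by the assignment equalities, the capacity inequalities, and the $g$ linear constraints $U_k(\vec{y})\ge U^*_k$. Since the fractional optimum lies in $Q$, any vertex of $Q$ that we round from will, provided its utilities are preserved, automatically satisfy $U'_k\ge U^*_k$.

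The technical engine is a structural lemma about vertices of $Q$. Let $H$ be the subgraph of edges that are strictly fractional at a vertex, and let $F_S$ be the set of fractionally assigned students. Any such student must carry at least two fractional edges, so every student vertex of $H$ has degree at least $2$; summing degrees gives $|E(H)|\ge 2|F_S|$. On the other hand, a basic-solution rank argument bounds the fractional support by the number of tight constraints touching it, namely $|E(H)|\le |F_S| + (\text{tight capacities}) + g$. Combining these shows that the number of fractional students, and hence the number of schools the rounding can over-fill, is controlled by the tight capacities plus $g$; in particular, in the regime where every remaining edge is fractional, some school carries only $O(1)$ fractional edges, so its capacity constraint can be relaxed with bounded future violation. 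This is exactly where the $g$ side constraints enter quantitatively, and it is the \emph{main obstacle}: every downstream charging argument must simultaneously (i) keep each group's utility from dropping and (ii) bound how much each capacity is exceeded.

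For Part (1), I would round $H$ in Shmoys--Tardos style by orienting each tree of the fractional support and assigning every fractional student to a single incident school, which inflates each school's load by at most one extra seat (the ubiquitous $+1$). The remaining coupling is removed by iterative relaxation: whenever a vertex has all edges fractional, the counting above exhibits a capacity constraint that can be dropped, and its eventual violation is charged against the $g$ utility constraints, so the dropped slack sums to at most $2g$. This yields $C'_j\le C_j+1+\delta_j$ with $\sum_j\delta_j\le 2g$. Crucially, since the utility constraints are never dropped and the rounding only redistributes fractional mass among incident edges, no $U_k$ decreases, so $U'_k\ge U^*_k$, and all steps run in polynomial time.

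For Part (2), the goal is to remove the per-school $+1$ (which can cost up to $m$ in total) at the price of $n^{O(g)}$ time. Here I would enumerate the $O(g)$ ``critical'' fractional edges associated with the $g$ utility constraints; there are $n^{O(g)}$ such guesses, and at least one is consistent with a target integral optimum. Fixing those edges reduces the instance to a pure assignment problem whose only remaining coupling is the $g$ utility lower bounds, and iterative rounding on this residual drops at most $O(g)$ capacity constraints, each violated by $O(g)$, for a total of $O(g^2)$, while every other capacity is met exactly (no $+1$). Returning the best assignment over all guesses gives $\sum_j\delta_j=O(g^2)$ with $U'_k\ge U^*_k$. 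I expect the delicate points to be proving the enumeration is correct --- that some guess certifies the desired integral solution --- and verifying that relaxing capacity constraints rather than utility constraints confines the violation to $O(g)$ schools.
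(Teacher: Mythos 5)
Your setup (convex program, freezing $\vec{U^*}$ as linear constraints, and the vertex-counting lemma $|E(H)|\le |F_S|+|\hat T|+g$ versus $|E(H)|\ge 2|F_S|$) matches the paper, but both parts of your rounding have genuine gaps. In Part (1), the missing idea is \emph{how} each fractional student is assigned to a single incident school so that all $g$ group utilities are simultaneously preserved. A Shmoys--Tardos matching argument preserves one aggregate linear objective (the matching can be chosen of minimum cost), but it does not preserve $g$ separate group lower bounds: the matching that controls capacities may well assign a student to a school of lower utility than its fractional mix, decreasing some $U_k$. Your assertion that ``the rounding only redistributes fractional mass among incident edges, [so] no $U_k$ decreases'' is false in general. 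The paper's fix is elementary but essential: assign each fractional student $i$ to $\operatorname{argmax}_j\{u_{ij}: y_{ij}>0\}$; since $\sum_j u_{ij}y_{ij}$ is a convex combination, this can only increase $i$'s contribution to \emph{every} group containing $i$, and the capacity analysis then charges at most $+1$ per school plus the excess degrees (at most $2g$ by the counting). Alternatively, a Lau--Ravi--Singh iterative relaxation that never drops the utility constraints would preserve them, but then your claim that the dropped capacity slack ``sums to at most $2g$'' needs its own proof, which you do not supply.

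Part (2) is the more serious gap: your guessing scheme has no substitute for the topological existence result the paper relies on. You propose to enumerate $O(g)$ ``critical'' edges ``consistent with a target integral optimum,'' but no integral solution with total violation $O(g^2)$ is known to exist at that point --- its existence is precisely what must be proven, so there is nothing to guess against. Concretely, after GAP preprocessing the fractional support decomposes into paths and cycles whose edges alternate between values $\alpha$ and $1-\alpha$; any integral rounding of a path must switch between the ``even'' and ``odd'' matchings at some set of breakpoints, each switch costing $+1$ at one school, and one must show that $O(g)$ breakpoints per path suffice to preserve all $g$ group utilities \emph{exactly}. This is the Stromquist--Woodall ``cake frosting'' lemma (a consequence of the ham-sandwich theorem): for $g$ piecewise-constant functions and any $\alpha\in(0,1)$ there is a set $X\subseteq[0,1]$, a union of at most $2g-1$ intervals, with $\int_X f_\ell = \alpha\int_0^1 f_\ell$ for all $\ell$. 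The paper reduces each fractional path to such an instance (with one function per group, defined per student), pays $O(g)$ violation per path and $O(g^2)$ overall, and the $n^{O(g)}$ time comes from searching for the frosting; there are at most $g$ fractional paths by a second vertex argument on the component LP. Your residual ``iterative rounding drops at most $O(g)$ capacity constraints, each violated by $O(g)$, while every other capacity is met exactly'' is exactly the statement that needs proof, and nothing in your proposal delivers it --- indeed without the frosting lemma the natural bound reverts to $+1$ per school, i.e.\ the $\Theta(m)$ loss Part (2) is meant to avoid.
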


Note that the latter algorithm is slower but yields better violation of capacities if $g \ll |T|$. For practical school choice scenarios~\cite{abdulkadirouglu2003school}, the number of students significantly outweighs the number of available schools, while the number of groups $g$ is typically small, even constant. The above violations are, therefore, quite mild.  We substantiate this via our empirical results, which we discuss later.

We also show that the school assignment problem mentioned above is {\sc NP-Hard} for the max-min fairness objective, even when the number of schools or the number of groups is only two. The latter result extends to the proportionality objective. This motivates the need to relax the capacity constraints (as in \cref{thm:main}) if our goal is to achieve a polynomial time algorithm.

\begin{theorem}\label{thm:hardness_1}
    Suppose the number of groups $g$ is part of the input, and the objective is to decide if the minimum utility received by any group is at least one. Then, the school assignment problem is {\sc NP-Complete} even when there are only two schools.
\end{theorem}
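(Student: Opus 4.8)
The plan is to prove \textbf{NP-completeness} by reduction from a standard \textbf{NP-hard} problem, likely \textsc{Subset Sum} or \textsc{Partition}, since we are dealing with packing items (students) into two buckets (schools) while meeting per-group utility thresholds. Membership in \NP is immediate: given a proposed assignment $\vec{y}$, one can verify in polynomial time that each student is matched, that both school capacities are respected, and that $U_k(\vec{y}) \ge 1$ for every group $k$. The substance of the proof is hardness, and the natural source of combinatorial difficulty with only two schools is the binary choice each student faces, which mimics a partition of a number set.

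First I would encode an instance of \textsc{Partition} --- a set of positive integers $a_1, \ldots, a_n$ with total sum $2B$, asking whether a subset sums to exactly $B$ --- into a school assignment instance. I would create two schools with capacities chosen so that any feasible assignment must split the students into two sides of prescribed size, forcing a genuine bipartition. Each integer $a_i$ would correspond to a student, and I would design groups whose utilities track the numbers $a_i$ so that the constraint ``every group receives utility at least one'' holds precisely when the subset assigned to each school sums to the target value. The key difficulty is that \textsc{Partition} is about a single equality ($\text{sum} = B$), whereas the school problem gives us an \emph{inequality} threshold ($U_k \ge 1$) per group. To convert one equality into the required lower bounds, I would introduce \emph{two complementary groups}: one group whose utility on school $1$ captures the sum on one side, and a second group capturing the sum on the other side, normalized so that each must clear the threshold of one simultaneously. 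Because the total utility is fixed by the integer sum, both thresholds can be met at once only when the split is exactly balanced, which is exactly the \textsc{Partition} condition.

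The main obstacle I anticipate is wiring the threshold structure so that the groups interact correctly: since groups may overlap and each student contributes to the group utilities through the single edge they are matched on, I must ensure that a student contributes the ``right'' amount to the ``right'' group depending on \emph{which} of the two schools they land in. Concretely, I would set $u_{ij}$ so that student $i$ contributes $a_i$ (suitably scaled by $1/B$) to the target group only when assigned to the designated school, and contributes nothing (or is absorbed by a slack group) otherwise. The capacity constraints on the two schools must be set so that they do not over-constrain the partition --- ideally the capacities should be loose enough that the only binding requirement comes from the utility thresholds, so that the reduction's ``yes'' instances correspond bijectively to balanced partitions.

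Finally, I would verify correctness in both directions: a balanced partition yields an assignment meeting all thresholds, and conversely any assignment meeting all thresholds forces a balanced partition because the sum of group utilities is conserved and each threshold consumes an equal share. I would check that all numbers used (utilities, capacities) are polynomially bounded in the input size so the reduction is genuinely polynomial, and confirm that exactly two schools suffice --- the binary school choice is precisely the binary subset-membership choice of \textsc{Partition}. The subsequent extension to the proportionality objective, mentioned in the theorem statement, should follow by observing that exact proportionality in this constructed instance coincides with meeting the unit thresholds, so the same hardness transfers.
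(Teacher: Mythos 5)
Your proposal has a genuine gap, and it stems from a feature of the model that your construction ignores: the utility $u_{ij}$ is attached to the (student, school) pair and is the \emph{same} for every group containing student $i$. Your plan hinges on a student ``contributing $a_i$ to the \emph{target} group only when assigned to the designated school,'' i.e.\ on group-dependent contributions, which the model cannot express. Concretely, there are only two ways to implement your ``two complementary groups.'' (a) One student per number, placed in both groups: then $U_1 = U_2$ identically under every assignment, so the two thresholds collapse into one and nothing forces a balanced split. (b) Two students per number, $p_i$ in group~1 (valuing school~1 at $a_i/B$) and $q_i$ in group~2 (valuing school~2 at $a_i/B$): then nothing couples the assignment of $p_i$ to that of $q_i$. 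In particular, with the ``loose'' capacities you explicitly ask for, the assignment sending every $p_i$ to school~1 and every $q_i$ to school~2 gives $U_1 = U_2 = 2 \ge 1$ regardless of whether a partition exists, so yes- and no-instances are not distinguished. Your conservation claim (``the sum of group utilities is conserved'') is false here: both groups can collect their full value simultaneously. Tightening capacities does not repair this, because with only two schools the capacity constraints control \emph{cardinalities} of the two sides, never sums of the $a_i$, which is precisely the quantity Partition is about; this is why the paper's own Partition reduction (used for its \emph{other} hardness theorem) needs $n+1$ schools of capacity one to couple each pair $p_i, q_i$.

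The paper proves this theorem by an entirely different route that exploits the clause ``$g$ is part of the input'': a reduction from \textsc{Set Cover}. Each element of the universe becomes a group, each set becomes a student (belonging to the groups of the elements it covers), school $s_1$ has capacity $k$ and utility $1$ for everyone, and school $s_2$ has unlimited capacity and utility $0$. Then every group attains utility at least one iff the $k$ students sent to $s_1$ correspond to a set cover of size $k$. Note this yields \emph{strong} NP-hardness with an unbounded number of groups, whereas a Partition reduction could at best give weak hardness with constantly many groups --- and the fact that the companion theorem in the paper needs $g=2$ but $n+1$ schools is a signal that two schools together with constantly many groups is unlikely to be hard at all. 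If you want to salvage your write-up, the fix is not to tune capacities or add slack groups, but to switch the source problem to a covering-type problem where many groups each impose a separate threshold.
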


\begin{theorem}\label{thm:hardness-2}
    Suppose the objective is to decide if an exactly proportional allocation exists. Then, the school assignment problem is weakly {\sc NP-Complete}, even when the number of groups $g = 2$.
\end{theorem}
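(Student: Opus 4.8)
The plan is to prove Theorem~\ref{thm:hardness-2} by a reduction from a weakly \NP-complete partition-type problem, since the "weakly" qualifier and the restriction to $g=2$ strongly suggest that the hardness arises from numerical splitting rather than from combinatorial structure. The natural candidate is \textsc{Partition}: given positive integers $a_1,\dots,a_n$ summing to $2B$, decide whether some subset sums to exactly $B$. The target is to encode the partition instance so that an exactly proportional allocation exists if and only if a balanced partition does.

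First I would set up the instance with two groups $S_1$ and $S_2$. The idea is to make the proportional shares $U^*_1$ and $U^*_2$ correspond to the two halves of the partition, so that achieving exact proportionality forces each group to receive precisely its share $U^*_k/2$ (recall from Equation~\ref{eq:prop} that exact proportionality demands $U_k = U^*_k$ for the Nash-optimal reference, but here I would work directly with the definition that each group obtains at least a $1/g = 1/2$ fraction of its standalone maximum, with equality forced). Concretely, I would create one item/student per integer $a_i$ and a small set of schools with tight capacities so that the only feasible assignments correspond to bipartitions of the $a_i$'s; the utilities $u_{ij}$ would be chosen as the values $a_i$. The capacities must be set so that feasibility plus the proportionality requirement together pin down the subset sum to be exactly $B$.

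The key steps, in order, are: (i) describe the gadget mapping each integer $a_i$ to a student whose utility for the relevant schools is $a_i$; (ii) argue that feasible assignments are in bijection (or at least surjection) with the relevant subset choices, using the school capacities to enforce that each $a_i$ goes to exactly one of two "sides"; (iii) compute the standalone optima $U^*_1, U^*_2$ and show they equal $2B$ each (or an appropriate value), so the proportional targets become exactly $B$; and (iv) conclude that an exactly proportional feasible assignment exists iff both groups simultaneously attain their $B$ target, which happens iff the $a_i$'s split into two equal-sum halves. Membership in \NP\ is immediate since a proposed assignment can be checked in polynomial time. I would also confirm that all numbers used remain polynomially bounded in the \emph{value} but the reduction is from a weakly-hard problem, which is exactly why the resulting hardness is \emph{weak} rather than strong.

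The main obstacle I anticipate is arranging the geometry so that proportionality is the \emph{only} binding requirement forcing the balanced split, rather than something the capacity constraints already trivially guarantee or trivially forbid. In particular, I must ensure that the standalone maxima $U^*_1$ and $U^*_2$ are what I claim, which requires checking that each group, optimizing alone, can actually collect the full weight $2B$ within capacities; and I must ensure the two groups genuinely compete for the same seats so that one group's gain is the other's loss, making the $1/2$-proportional targets jointly achievable only under a perfect balance. Getting the interaction between the two groups through shared scarce capacity to exactly mirror \textsc{Partition} — neither over- nor under-constraining — is the delicate part, and I would likely need a carefully chosen dummy school or padding items to calibrate the totals.
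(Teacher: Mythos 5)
Your high-level plan matches the paper's proof: reduce from \textsc{Partition}, use $g=2$ groups whose proportional shares are each half of the total, and force equality through competition for scarce seats. However, there is a genuine gap in the one concrete piece of your construction: you propose ``one item/student per integer $a_i$,'' and this cannot work in this model. A group's utility $U_k = \sum_{i \in S_k} \sum_{j} u_{ij} y_{ij}$ depends only on the assignments of that group's \emph{own} members. A single student's assignment therefore can never shift value between the two groups: if the student belongs to only one group, the other group's utility is entirely unaffected by where it is placed; if it belongs to both groups, it contributes identically to both, so $U_1$ and $U_2$ move in lockstep. Either way, the school choice of one student cannot encode ``$a_i$ goes to side 1 versus side 2'' of the partition, which is exactly the zero-sum interaction your step (iv) requires. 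Your own desideratum --- that ``the two groups genuinely compete for the same seats so that one group's gain is the other's loss'' --- is unachievable with a single student per integer, and you leave this part unresolved as ``the delicate part.''

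The fix, which is the paper's construction, uses \emph{two} students per integer: for each $a_i$, create $p_i$ in group 1 and $q_i$ in group 2, a school $S_i$ of capacity $1$ that both $p_i$ and $q_i$ value at $a_i$, and one dummy school $S_0$ of capacity $n$ that everyone values at $0$. Each group's standalone optimum is $\sum_i a_i = 2B$ (all its students win their schools $S_i$), so the proportional shares are $B$ each. Since each $S_i$ serves at most one of $p_i, q_i$, any feasible assignment satisfies $U_1 + U_2 \le 2B$, so demanding $U_1 \ge B$ and $U_2 \ge B$ forces $U_1 = U_2 = B$; that is, the indices at which group 1 wins the seat form a subset summing to exactly $B$. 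Conversely, a balanced partition yields such an assignment, with all losing students absorbed by $S_0$. The rest of your outline --- membership in {\sc NP}, the reduction from a weakly hard problem explaining the ``weakly'' qualifier, and the forced-equality argument --- is sound, but the gadget is the heart of the reduction, and the version you specify does not function.
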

We defer the proofs to \cref{app:hard}.
We also note that to achieve the proportionality objective, it is known that a capacity violation of $\frac{g}{2}$ is required in the worst-case~\cite{PTF}.

\newcommand{\qmax}{q_{\mathrm{max}}}

\paragraph*{Packing/Covering Constraints and Multi-criteria Optimization.} In \cref{sec:covering}, we present an extension of our framework to handle assignments with more general constraints. As an application, suppose the utility of a student for a school is multi-dimensional, capturing aspects like academic excellence, or location, or diversity of student body.  The goal is to achieve at least a specified total utility value in each dimension. Such {\em multi-objective} optimization \cite{gourves_multiobjective,serafini_multiobjective,shimada2020multi} can be modeled by covering or packing constraints, and we present a result similar to \cref{thm:main} for this general setting. Formally, we solve a linear relaxation of the following integer program, where we have a setting as in \cref{sec:prelim} but instead of the fairness objective and utilities, we have a matrix $Q\in \mathbb{R}^{n\times r}$ defining $r$ linear constraints that we are required to satisfy.
In other words, we wish to solve the following integer program:

\begin{minipage}{0.1\linewidth}
\begin{equation}
    \tag{IP}\label{IP}
\end{equation}
\end{minipage}
\begin{minipage}{0.8\linewidth}
    \begin{alignat}{2}
    \sum_{j} y_{ij} & = 1 && \forall \mbox{ students } i \label{lp1:distribution}\\
    \sum_{i } y_{ij} & \le  C_j && \forall \mbox{ schools } j \label{lp1:school}\\
    \sum_{i,j} q_{ij}^{\ell} y_{ij} &\geq Q_{\ell} && \forall \ell \in \{1,2,\ldots,r\} \label{lp1:extra-constraints}\\
    y_{ij} & \in  \{0,1\} \qquad&& \forall~i,j \label{lp1:integers}
\end{alignat}
\end{minipage}
\medskip

\newcommand{\lp}{LP\xspace}


We show that \cref{thm:main} generalizes to this setting at the cost of incurring an additional additive loss proportional to $\abs{\qmax} := \max_{i,j,\ell} \abs{q_{ij}^{\ell}}$ and $r$, the number of rows in $Q$. 

\begin{theorem}\label{thm:arbitrary}
For arbitrary constraints, when the linear programming relaxation of (\ref{IP}) has a feasible solution, there are algorithms that output an integer solution
\begin{itemize}
\item In polynomial time, such that the constraints \cref{lp1:extra-constraints} are preserved up to an additive $r\cdot \abs{\qmax}$; and if each school is given one unit extra capacity, the total violation in \cref{lp1:school} over this is $2r$. 
\item In $n^{O(r)}$ time, such that the constraints \cref{lp1:extra-constraints} are preserved up to an additive $O(r^2\cdot \abs{\qmax})$; and the total violation in \cref{lp1:school} is $O(r^2)$.
\end{itemize}
\end{theorem}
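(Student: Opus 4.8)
The plan is to run, almost verbatim, the two rounding algorithms that underlie \cref{thm:main}, except that they are applied to a basic feasible solution of the linear relaxation of (\ref{IP}) rather than to the optimum of the fairness convex program: the $r$ rows of $Q$ now play exactly the role that the $g$ group-utility constraints played there. Concretely, I would first solve the LP relaxation of (\ref{IP}), extract a vertex (basic feasible) solution $\vec{y}^*$, permanently fix every coordinate that is already $0$ or $1$, and work only with the fractional support from then on.

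The structural heart is a counting lemma on vertices. Let $H$ be the bipartite graph on students and schools induced by the fractional edges of $\vec{y}^*$. At a vertex, the number of fractional edges equals the number of linearly independent tight constraints restricted to them, and these are drawn only from the assignment rows (\ref{lp1:distribution}), the capacity rows (\ref{lp1:school}), and the $r$ side rows (\ref{lp1:extra-constraints}). Since $H$ is bipartite, its unsigned vertex--edge incidence matrix has rank exactly (number of incident vertices) $-$ (number of components), so the assignment and capacity rows contribute at most this much and the remaining tight rows are among the $r$ side constraints. Hence $|E(H)| \le |V(H)| - (\text{components}) + r$, i.e. the cyclomatic number of $H$ is at most $r$: deleting at most $r$ edges turns $H$ into a forest. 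A second elementary observation is that every fractional student has fractional degree at least $2$ in $H$ (a tight equality (\ref{lp1:distribution}) with a single fractional neighbour would force that neighbour integral), so all degree-$1$ vertices of $H$ are schools.

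For the polynomial-time guarantee I would run iterative relaxation. While fractional edges remain I re-solve to a vertex, delete any $0$-edge, and permanently assign any $1$-edge while decrementing that school's residual capacity. When no coordinate is integral, the cyclomatic bound together with the fact that every fractional student has degree at least $2$ guarantees, via a token-counting argument in the style of Lau--Ravi--Singh, that some still-active side constraint is incident to only a constant number of fractional edges; I drop that constraint. At most $r$ constraints are ever dropped, each is thereafter disturbed only on its $O(1)$ incident fractional edges (additive violation $O(\abs{\qmax})$, extra school load $O(1)$), and summing over the $\le r$ drops yields total additive side-violation $r\cdot\abs{\qmax}$ and total extra capacity violation $2r$; once the side constraints are gone the residual polytope is a pure bipartite transportation polytope supported on a forest, and Shmoys--Tardos forest rounding assigns every remaining student integrally while overloading each school by at most one seat, accounting for the ``$+1$'' in \cref{lp1:school}. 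For the $n^{O(r)}$ algorithm I would instead resolve all fractionality at once: by the cyclomatic bound only $O(r)$ edges separate $\vec{y}^*$ from a forest, so I enumerate the $n^{O(r)}$ ways to fix those ``excess'' edges (endpoints and rounded values) and, for each guess, solve the residual integral transportation problem, keeping the best feasible completion; this gives capacity violation $O(r^2)$ and side-violation $O(r^2\abs{\qmax})$, the quadratic arising because resolving each of the $O(r)$ cycles can perturb up to $O(r)$ schools and constraints.

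The step I expect to be genuinely delicate is controlling the side constraints. Naive forest rounding preserves the assignment and capacity (node) constraints but may move $\Theta(n)$ students, shifting a single row of $Q$ by $\Theta(n)\abs{\qmax}$---far worse than the target $r\abs{\qmax}$. The whole argument hinges on showing that, relative to the side constraints, one never needs to disturb more than $O(r)$ edges' worth of mass, which is exactly what the cyclomatic-number-at-most-$r$ lemma buys; making the charging telescope so that the cumulative side-violation stays at $r\abs{\qmax}$ (respectively $O(r^2\abs{\qmax})$) rather than accumulating per rounded student is the crux of the proof.
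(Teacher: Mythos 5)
Your structural observations (the vertex-counting bound, fractional students having degree at least two, the fractional graph being a forest plus at most $r$ extra edges) are sound, but both branches of your argument break at exactly the point you yourself flag as delicate, and the cyclomatic-number lemma does not repair it. For the polynomial-time branch, the Lau--Ravi--Singh-style dropping step needs some active side constraint whose support on the fractional edges has size $O(1)$; since the rows of $Q$ are arbitrary, every row can have nonzero coefficients on \emph{every} edge (e.g., all $q^{\ell}_{ij}\neq 0$), so no such constraint need ever exist and the iteration stalls. Worse, the damage to the side constraints does not come from the $\le r$ ``excess'' edges that the cyclomatic bound controls, but from the forest itself: rounding a tree with $\Theta(n)$ fractional students reassigns $\Theta(n)$ units of mass, and with dense, mixed-sign rows of $Q$ there is no per-student choice (analogous to the $\arg\max_j u_{ij}$ choice that makes \cref{thm:main1} work for group-utility constraints) that is simultaneously good for all $r$ rows. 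The same hole reappears in your $n^{O(r)}$ branch: after guessing the $O(r)$ excess edges you must still round the residual forest integrally while approximately preserving all rows of $Q$, and ``solve the residual transportation problem'' ignores exactly that requirement.

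The paper closes this gap with two ideas your proposal is missing. For the polynomial-time bound, after the first vertex solution (whose degree counting already pays for the capacity guarantee --- $+1$ per school plus $2r$ total, \emph{independent} of how the fractional students are later assigned), it deletes the capacity constraints \cref{lp1:school} entirely and re-solves the LP on the fractional support, keeping the side constraints \cref{lp1:extra-constraints}. At a fully fractional vertex of this smaller LP the only possible tight constraints are the $|S''|$ student equalities and the $r$ side rows, so $2|S''|\le |E'|\le |S''|+r$, forcing $|S''|\le r$: at most $r$ fractional students survive, only they are assigned arbitrarily, giving additive violation $r\cdot q_{\max}$, while the side constraints are satisfied exactly by the fractional solution throughout because they are never dropped. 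For the $n^{O(r)}$ bound, the paper reuses the path/cycle decomposition behind \cref{thm:main2} and applies the Stromquist--Woodall cake-frosting lemma with the $r$ rows of $Q$ as the players (setting $u_{even}(\ell,i)=q^{\ell}_{ij}$ on even-matching edges, and similarly for odd); the perfect frosting preserves every row's value exactly, and only the $O(r)$ interval-boundary students per path, over the $O(r)$ fractional paths, are assigned arbitrarily, which is precisely where both the $O(r^2\cdot q_{\max})$ constraint slack and the $O(r^2)$ capacity violation arise. Some ham-sandwich-type tool of this kind (or a substitute for it) seems unavoidable for simultaneously controlling $r$ arbitrary-sign linear functionals during rounding, and it is absent from your proposal.
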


As a specific application, in \cref{sec:dominance} we study the assignment with ranks problem first considered in~\cite{asn22_optimalmatchings}. Here, each student ordinally ranks the schools with possible ties. An input signature $\vec{\rho}$ of length $r$, the goal is to find an assignment where the number of students who are assigned their first $k$ choices (for all $k \le r$) is at least $\sum_{j=1}^k \rho_j$. 
In comparison to the  algorithm in~\cite{asn22_optimalmatchings} that runs in time $n^{O(r^2)}$ and uses multivariate polynomial interpolation, our algorithms present substantial improvements in both runtime and ease of implementation at the cost of a small capacity violation, when students are given a small choice $r$ of ranks. We show that this problem has an additional `monotonicity' in the constraints, that enable us to avoid the additive violation mentioned above.

\begin{theorem}\label{thm:ranking}
Given a feasible fractional solution to a matching with ranking instance with input signature $\rho$, there is an algorithm to output a matching with signature $\sigma \succ \rho$ that satisfies relaxed school capacities $\vec{C'}$ with one of the following guarantees:
\begin{enumerate}
\item A $\poly(n,r)$ running time and satisfies $C'_j\leq C_j + 1 + \delta_j$, and $\sum_j \delta_j \le 2r$.
\item A $n^{O(r)}$ running time and satisfies  $C'_j\leq C_j+ \delta_j$, and $\sum_j \delta_j = O(r^2)$.
\end{enumerate}
\end{theorem}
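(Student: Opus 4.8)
The plan is to realize the signature-dominance requirement as the $r$ side constraints of \eqref{IP} and to reuse the iterative-rounding engine behind \cref{thm:arbitrary}, while exploiting the special (monotone) shape of these constraints to remove the additive $r\cdot\abs{\qmax}$ slack entirely. Concretely, for a student $i$ let $\mathrm{rk}_i(j)$ be the rank $i$ assigns to school $j$, and for each threshold $k\in\{1,\dots,r\}$ set $q^{k}_{ij}=1$ iff $\mathrm{rk}_i(j)\le k$. Then $\abs{\qmax}=1$, and the requirement $\sigma\succ\rho$ is exactly the family of prefix constraints $\sum_{i,j}q^{k}_{ij}y_{ij}\ge R_k$ with $R_k=\sum_{t\le k}\rho_t$. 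The crucial structural feature is that the support of column $(i,j)$ inside this block is the suffix $\{k:\,k\ge \mathrm{rk}_i(j)\}$: assigning a student to a rank-$t$ school contributes to every prefix constraint $k\ge t$ and to none below it.

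First I would run the same iterative relaxation as for \cref{thm:arbitrary}: repeatedly take an extreme point of the polytope cut out by the student equalities, the capacity inequalities, the $r$ prefix inequalities, and the box constraints; freeze every $0/1$-valued edge (updating residual capacities and residual requirements $R_k$); and discard a constraint whenever it becomes nearly tight. The counting lemma is identical to the general case: at a vertex with no integral coordinate, the number of fractional edges is at most the number of tight student, school, and prefix rows, and since only $r$ prefix rows exist, one can always expose either a school of small fractional degree (relaxed at a cost of $+1$ to its capacity) or a prefix row touched by only a constant number of fractional edges. Bookkeeping these relaxations yields the two capacity profiles claimed: in polynomial time, $C'_j\le C_j+1+\delta_j$ with $\sum_j\delta_j\le 2r$; and, by instead enumerating the $0/1$ behaviour of the $O(r)$ edges incident to tight prefix rows (there are $n^{O(r)}$ such guesses) and solving the residual bipartite transportation LP exactly, $C'_j\le C_j+\delta_j$ with $\sum_j\delta_j=O(r^2)$.

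The step where monotonicity is essential, and the one I expect to be the main obstacle, is showing that the prefix constraints are maintained with no additive loss, so that the output signature genuinely dominates $\rho$ rather than merely matching it up to an additive $r$. The point is that we never relax a prefix row by dropping it; instead we round its remaining fractional edges upward in rank. Because the coefficient supports are nested suffixes, rounding a student onto a rank-$t$ school can only increase the prefix sums $\sum_{i,j}q^{k}_{ij}y_{ij}$ for all $k\ge t$ and changes none for $k<t$, so no two prefix requirements can ever conflict. I would formalize this with an exchange/uncrossing argument on the fractional core: viewing the frozen-plus-fractional solution as a fractional matching that already satisfies every $\sum_{i,j}q^{k}_{ij}y_{ij}\ge R_k$, a Hall-type (max-flow/min-cut) argument over the nested family of rank-threshold sets produces an integral matching of the same residual students whose prefix sums are simultaneously at least the fractional ones. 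Chaining this with the capacity relaxations above gives an integral assignment respecting $\vec{C'}$ with signature $\sigma\succ\rho$.

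The delicate part is guaranteeing feasibility of the residual integral matching against the relaxed (violated) capacities \emph{simultaneously} with the prefix lower bounds; here I would use that the rank-threshold sets form a chain, so the lower bounds they impose constitute a laminar (indeed linearly ordered) family, for which the associated degree-constrained bipartite subgraph polytope is integral once the capacity upper bounds have been loosened by the amounts tracked above. This is precisely the place where the general argument of \cref{thm:arbitrary} is forced to pay an additive $r\cdot\abs{\qmax}$, and where the monotone nesting of the constraints lets us avoid it.
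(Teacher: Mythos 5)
Your reduction of signature dominance to the $r$ prefix constraints $\sum_{i,j} q^k_{ij} y_{ij} \ge R_k$ with nested suffix supports, and your key rounding rule --- assign each remaining fractional student to its \emph{most preferred} school in its fractional support, which weakly increases every prefix sum simultaneously --- is exactly the paper's route: this is precisely the ``monotonicity'' condition behind \cref{thm:monotone1,thm:monotone2}, and combined with the GAP-rounding counting argument of \cref{thm:main1} it correctly delivers guarantee~1 ($+1$ per school plus $\sum_j\delta_j\le 2r$). One wording problem even here: you say you ``discard a constraint whenever it becomes nearly tight.'' The paper never drops a prefix row (and your own third paragraph insists it doesn't); dropping a covering constraint would forfeit exact preservation of the signature, so that clause should simply be deleted.

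The genuine gap is in guarantee~2, where your proposed mechanism fails on both of its load-bearing claims. First, there is no set of ``$O(r)$ edges incident to tight prefix rows'': prefix row $k$ has nonzero coefficient on \emph{every} edge of rank at most $k$, so the edges touching tight prefix rows are essentially all fractional edges, and after the iterative phase the fractional support is a union of paths and cycles that can contain $\Theta(n)$ edges (only the \emph{excess degree} is $O(r)$, per \cref{eq:2g}). Hence the enumeration space you describe is not $n^{O(r)}$. Second, the residual polytope is not integral even though the rank-threshold edge sets form a chain: take students $p_1,p_2$, unit-capacity schools $s_1,s_2$, with $p_1$ ranking $s_1\succ s_2$, $p_2$ ranking $s_2\succ s_1$, and the single constraint $y_{p_1s_1}+y_{p_2s_2}\ge 1$; the all-halves point is a vertex of this polytope. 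Nested \emph{edge-set} covering constraints are not laminar \emph{degree} constraints on vertices, and loosening capacities does not restore integrality, so ``solve the residual transportation LP exactly'' does not produce an integral point. What your argument is missing is an existence theorem guaranteeing an integral interpolation with few switches; the paper supplies this via the Stromquist--Woodall cake-frosting lemma applied to the path/cycle decomposition of \cref{thm:main2}: the component LP has at most $r$ fractional components, within each fractional path the frosting yields an integral mix of the even and odd matchings whose contribution to every prefix row is at least the fractional one using at most $2r-1$ intervals, each interval boundary costs one unit of capacity (hence $O(r^2)$ total), and searching for the frosting by enumerating the cells containing its endpoints gives the $n^{O(r)}$ running time. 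In that scheme your monotone-rounding idea is still needed, but only to place the cycle-breaking and interval-boundary students at their most preferred school so that no additive loss in the signature is incurred.
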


\paragraph*{Benchmark and Simulation.} In \cref{sec:expt}, we present a benchmark for the group fairness objective, where we use an ILP to find the optimum violation in capacity needed to achieve the utilities generated by the convex program. \cref{thm:main} yields a theoretical upper bound on the capacity violation. However, we show that both the ILP benchmark as well as our rounding algorithms yield substantially better violations on realistic instances, hence showcasing the practicality of our approach. We also present empirical results for the aforementioned matching with ranks problem in \cref{sec:expt2}.

\subsection{Techniques and Related Work.}\label{sec:related}

Our algorithm uses LP rounding and borrows ideas from the seminal Generalized Assignment Problem (GAP) rounding technique of Lenstra, Shmoys, and Tardos~\cite{LST,ShmoysT93}. Their iterative rounding procedure involves the observation that the number of fractional variables in a vertex solution to a linear programming relaxation is bounded. We build on this idea and apply it to a linear program written on paths and cycles instead of assignments, enabling us to combine it with a theorem of Stromquist and Woodall~\cite{STROMQUIST}. This approach was recently used in a similar model by the authors in \cite{PTF}, which they called ``cake frosting''. This theorem is a consequence of the celebrated ham-sandwich theorem~\cite{HamSandwich}, and is, hence, non-constructive. Using this technique, they achieve approximate proportionality while violating the total capacity by $O(g \log g)$, where $g$ is the number of groups.

In contrast, we use convex programming relaxation to handle {\em arbitrary} fairness objectives such as proportional fairness, Pareto-optimality, and maximin fairness, vastly generalizing the space of objectives. Our method only loses $O(g^2)$ on the total capacity, while preserving utilities from the fractional relaxation. For instance, our method would achieve {\em exact} proportionality, and by \cref{eq:prop}, it even achieves a generalization of this concept to subsets of groups. Further, we show empirically that the use of convex programming keeps the cake frosting instance very small and hence tractable.

At a high level, our main technical contribution is to show how the cake frosting method can be applied to certain types of fractional solutions, in particular, a vertex solution constructed via GAP rounding of the convex programming relaxation.  We hence showcase the full power of the technique in~\cite{STROMQUIST}.  In addition, as discussed above, our techniques extend smoothly to handle arbitrary covering or packing constraints on the allocations, which is motivated by multi-objective optimization and rank optimization.

We note that the idea of using cake frosting to round fractional solutions has appeared before for {\em packing problems} in Grandoni {\em et al.}~\cite{GrandoniRSZ14}, where the authors develop a PTAS for matchings in general graphs with $O(1)$ budget constraints on the set of chosen edges. At a high level, all these approaches -- the ones in~\cite{GrandoniRSZ14,PTF} and our work -- apply cake frosting to decompose paths and cycles to approximately preserve constraints, but differ in the details of how the paths and cycles are constructed from the integer or fractional solutions. For instance, in contrast to~\cite{PTF}, which defines the frosting function based on schools, we define it based on students, hence avoiding an additive violation on the utility. Further, since~\cite{GrandoniRSZ14} consider packing problems, their reduction to cake frosting is entirely different in the technical details.

\paragraph*{Matching with Violations.}
Unlike many resource allocation problems, school assignments have flexibility in the capacities assigned to schools. Additional seats can be added with appropriate investments, or minor adjustments to class structures.
Governments have also shown a willingness to add seats, particularly in situations where students would have gone unassigned \cite{bobbio_schoolchoice}.
Along these lines, several papers have considered such assignment problems with small capacity violations. These papers mostly fall into two categories - those that try to directly optimize the capacity violations in some form while achieving a set goal like stability or perfectness~\cite{bobbio2022capacity,chen23_optimalviolation,vaish_stability_capacity,asn22_optimalmatchings} and those that optimize some other objective like fairness with provably small capacity violations~\cite{GrandoniRSZ14,nguyen2018near,PTF}.
Our model falls in the latter category -- we wish to find an assignment that satisfies some notion of fairness while violating capacities by as little as possible.


\paragraph*{Group Fairness.} The school assignment problem that we study was first considered recently in Procaccia, Robinson, and Tucker-Foltz~\cite{PTF}. The only objective considered in this work is {\em proportionality} --- in the assignment, each of the $g$ demographic groups is required to achieve at least $1/g$ fraction of the utility it could have achieved had it been the only group in the system.
Various such notions of group fairness have been studied in many contexts such as clustering \cite{bera2019fair,ghadiri21_sociallyfair}, knapsack \cite{patel2020group}, and matchings \cite{esmaeili2023rawlsian,sankar2021matchings}.
The objective function in \textit{Socially Fair $k$-clustering} \cite{ghadiri21_sociallyfair}, where the average clustering cost across each demographic group has to be minimized, is particularly similar to ours.

\section{Approximation Algorithm: Proof of \cref{thm:main}}
\label{sec:proof}
In this section, we prove \cref{thm:main}. We begin with a convex programming relaxation to the problem and then present two rounding schemes that yield the two guarantees in the theorem. 

\subsection{Convex Program Relaxation} 
\label{sec:convex}
Recall that there is a set $S$ of $n$ students divided into $g$ possibly overlapping groups sets $S_1, S_2, \ldots, S_g$. There is a set $T$ of schools, where school $j$ has capacity $C_j$.
Finally, there is a bipartite graph $G=(S\cup T,E)$ between the students and schools that represents possible assignments.
An assignment of students to schools is a feasible solution $\vec{y} \in \mathcal{Q}$, where $\mathcal{Q}$ is the polytope defined by the following constraints:
\[ \begin{array}{rcll}
\sum_{j \in T} y_{ij} & = & 1 & \forall i \in S \\
\sum_{i \in S} y_{ij} & \le & C_j & \forall j \in T \\
y_{ij} & \in & \{0,1\} & \forall (i,j) \in E
\end{array}
\]
Let $\mathcal{P}$ denote the linear relaxation of $\mathcal{Q}$, where the last constraint is replaced by $0\leq y_{ij}\leq 1$. The first step is to write the following convex programming relaxation:
\[ \mbox{Maximize} \sum_{k=1}^g f(U_k) \]
\[ \begin{array}{rcll}
 \sum_{i \in S_k} \sum_{j \in T} u_{ij} y_{ij} & \ge & U_k & \forall  \mbox{ groups } k \\
\vec{y} &\in & \mathcal{P}\\
U_k & \ge & 0 & \forall \mbox{ groups } k
\end{array} \]

This can be solved in polynomial time. Let the optimal solution to the convex program yield utility vector $\vec{U^*}$. We now need to round the following set of constraints so that the $\{y_{ij}\}$ values are integer. Denote this formulation as (LP1).
\[ \begin{array}{rcll}
  \qquad  \vec{y} \in \mathcal{P},\qquad \forall  \mbox{ groups }k, \sum_{i \in S_k} \sum_{j \in T} u_{ij} y_{ij} & \ge & U^*_k & .
\end{array}
\]
We now present two rounding algorithms that yield the corresponding guarantees in \cref{thm:main}.

\subsection{Generalized Assignment Rounding}
\label{sec:gap}

The first rounding algorithm is similar to rounding for generalized assignment (GAP)~\cite{LST,ShmoysT93} and is presented in \cref{alg:approx}.
We remark that the iterative procedure from Step~\ref{alg1-step:begin-iteration}-\ref{alg1-step:end-iteration} is not necessary; the same can be achieved with a single LP solution.
We present it this way for ease of exposition.
We show that it achieves the following guarantee, yielding the first part of \cref{thm:main}.

\begin{theorem}
\label{thm:main1}
\Cref{alg:approx} runs in polynomial time and finds an integer solution $\vec{y} \in \mathcal{P}$ that satisfies relaxed school capacities $\vec{C'}$ and yields utilities $\vec{U'}$, where
\begin{enumerate}  
\item $U'_k \ge U^*_k$ for all groups $k$, and
\item $C'_j \le C_j + 1 + \delta_j$, where $\sum_j \delta_j \le 2g$.
\end{enumerate}
\end{theorem}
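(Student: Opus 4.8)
The plan is to round the fractional solution to (LP1) using the classical iterative GAP-rounding approach of Lenstra--Shmoys--Tardos, but applied to the constraint system (LP1) whose only side constraints beyond the assignment polytope are the $g$ group-utility lower bounds. First I would solve (LP1) and take a vertex (basic feasible) solution $\vec{y}$. The starting observation is structural: in any vertex of the polytope $\mathcal{P}$ augmented with the $g$ utility constraints, the number of fractional edge-variables is tightly controlled. Fix the integral variables and consider only the subgraph $H$ on the support of the fractional variables. A standard counting argument on the tight constraints at a vertex shows that the number of fractional variables is at most the number of students with fractional assignments plus the number of tight school-capacity and utility constraints. Since each fractionally-assigned student contributes at least two fractional edges, a rank/degree count forces the fractional support $H$ to be a forest of paths and cycles plus a limited amount of extra structure coming from the $g$ utility constraints; more precisely, the number of ``excess'' edges beyond a forest is bounded by $g$.

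Next I would iterate: at each step resolve (LP1) restricted to the current fractional support (after freezing integral variables), obtain a fresh vertex, and look for a student assigned integrally or a leaf in $H$. Whenever a student is matched integrally, fix that variable and remove it; whenever a connected component of $H$ becomes a simple path or an isolated edge with a degree-one endpoint, round along it so that each such student is assigned to one of its (at most two) fractional schools. The key accounting is that the relaxed capacity $C'_j$ is allowed to exceed $C_j$ because rounding a path can push at most one extra unit of load onto a school; charging these overflows to the edges of $H$ that lie outside a spanning forest, and to path endpoints, yields the claimed bound. Concretely, the ``$+1$'' in $C'_j \le C_j + 1 + \delta_j$ comes from the single unit a school may gain along the path/cycle decomposition, and the $\sum_j \delta_j \le 2g$ comes from charging the residual violations to the $2g$ endpoints of the (at most $g$) utility-induced excess edges. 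Throughout, the utility guarantee $U'_k \ge U^*_k$ is maintained because we only round variables in a way consistent with a feasible LP solution satisfying $\sum_{i\in S_k}\sum_j u_{ij} y_{ij} \ge U^*_k$, and rounding along paths/cycles preserves each group's utility (or only increases it) since we never decrease the fractional mass assigned within any group.

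The main obstacle I expect is the precise structural lemma bounding the fractional support and, crucially, pinning down that the excess over a forest is at most $g$ so that the total capacity violation telescopes to $2g$ rather than something proportional to $|T|$. The subtlety is that the $g$ utility constraints couple edges across different students and schools, so the fractional subgraph need not be a clean union of paths and cycles as in vanilla GAP; there may be up to $g$ extra linearly-independent tight utility constraints, each of which can create one cycle or branching in $H$. I would handle this by a rank argument: at a vertex, the number of nonzero variables equals the number of linearly independent tight constraints, and since only $g$ of these are utility constraints, deleting at most $g$ edges from $H$ leaves a forest. The endpoints of these $g$ deleted edges are exactly what must absorb the $\delta_j$ violations, and summing over them gives $\sum_j \delta_j \le 2g$. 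Verifying that the iterative freezing never destroys feasibility of the utility constraints, and that the per-school $+1$ and the aggregate $\delta_j$ bounds do not double-count, is the delicate bookkeeping I would carry out carefully.
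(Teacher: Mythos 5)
Your structural counting (at a vertex of (LP1) the fractional support exceeds a disjoint union of paths/cycles by at most $g$ extra edges, traceable to the $g$ tight utility constraints) is sound and matches the spirit of the paper's argument, which does the same accounting via the handshake lemma: $2|E| = 2(|S'|+|\hat{T}|+g')$ yields $\sum_v \delta_v \le 2g$ for the excess degrees. However, there is a genuine gap at the heart of your rounding step: you claim that rounding along paths/cycles ``preserves each group's utility (or only increases it) since we never decrease the fractional mass assigned within any group.'' This is false in general. Rounding a path means choosing one of its two alternating matchings (a parity); whichever parity you choose, some students are moved to their \emph{lower}-utility fractional school, and a group containing those students can end up strictly below $U^*_k$. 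Choosing parities so that \emph{all} $g$ groups simultaneously meet their lower bounds is precisely the hard part of this problem --- it is what the paper's second algorithm (\cref{thm:main2}) solves via the cake-frosting lemma, at the cost of $n^{O(g)}$ time and $O(g^2)$ violation. It cannot be waved through inside a polynomial-time path-rounding argument with a $2g$ bound.

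The missing idea, and the way the paper's proof of \cref{thm:main1} actually works, is to abandon path rounding entirely in the final step: every remaining fractionally-assigned student $i$ is simply assigned to $\mbox{argmax}_j \{u_{ij} \mid y_{ij} > 0\}$. Since $\sum_j y_{ij} = 1$, each student's integral utility then pointwise dominates its fractional contribution, so every group's utility weakly increases regardless of how groups overlap --- utility preservation becomes trivial. The price is paid entirely on the capacity side: a school may receive \emph{all} of its fractional neighbors, and this is where the degree counting enters. A tight school of degree $2+\delta_v$ has capacity at least $1$, so it suffers violation at most $1+\delta_v$; a non-tight school of degree $\delta_v$ suffers at most $\delta_v$; summing the excess degrees gives the $+1$ per school plus $\sum_j \delta_j \le 2g$. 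So your bookkeeping of the $\delta_j$'s is recoverable, but the mechanism that makes the utility guarantee hold must be the argmax assignment (or something equivalent), not parity rounding along the path/cycle decomposition.
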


\begin{algorithm}[htbp]
\caption{GAP Rounding}\label{alg:approx}
\hspace*{\algorithmicindent}
\begin{algorithmic}[1]
\Repeat \label{alg1-step:begin-iteration}
\State Obtain a vertex solution $\vec{y}$ to (LP1).
\For{all $y_{ij}=b\in \{0,1\}$}
    \State Fix $y_{ij}=b$ and remove this variable from (LP1), and update the constraints.\label{alg1-step:integral}
\EndFor
\Until{(LP1) is not modified.}\label{alg1-step:end-iteration}
\State For each remaining student $i$ (these have a degree more than $1$), assign this student to $\mbox{argmax}_j \{u_{ij} | y_{ij} > 0 \}$. \label{alg1-step:round}
\end{algorithmic}
\end{algorithm}

\begin{proof}[Proof of \cref{thm:main1}.] We denote the number of incident edges with $y_{ij} > 0$ as the ``degree'' of a vertex. First, note that since each student $i$ is assigned to $\mbox{argmax}_j \{u_{ij} | y_{ij} > 0 \}$,  the utility in the integer solution is at least that in (LP1).

We now bound the capacity violation. We note that Step~\ref{alg1-step:integral} cannot violate any capacities since $\Vec{y}$ was feasible for (LP1). 
It  remains to argue that Step~\ref{alg1-step:round} does not incur too many capacity violations.
Let $E$ be the set of remaining edges with $y_{ij} \in (0,1)$.  Since $\Vec{y}$ was an extreme point solution to (LP1), $|E|$ constraints of (LP1) must be tight. 

At the beginning of Step~\ref{alg1-step:round}, let $T'$ be the set of remaining schools and among these, let $\hat{T}$ be those whose capacity constraints are tight.  Let $S'$ be the set of remaining students. Each student has a tight constraint associated with it. Suppose $g'$ of the $g$ constraints corresponding to the groups are tight. Since we have a vertex solution, 
\begin{align}
    2|E| = 2(|S'|+|\hat{T}|+g')\label{eqn:vertex}
\end{align}
From the Handshaking lemma, we also have
    $\sum_{v\in S'\cup T'} \mathrm{deg}(v) = 2|E|.$
Combining this with \cref{eqn:vertex}, we have,
\begin{align*}
    \sum_{v\in S'\cup T'}\mathrm{deg}(v) - 2|S'|-2|\hat{T}|= 2g'
    \implies\sum_{v\in T'\setminus \hat{T}}\mathrm{deg}(v) +\sum_{v\in S'\cup \hat{T}} (\mathrm{deg}(v)-2) =2g'\leq 2g
\end{align*}

We know that each school in $\hat{T}$ has a degree of at least 2, since the capacity is an integer, and all assignment variables are strict fractions. Similarly,  each student in $S'$ has a degree at least $2$.  Therefore, every term in the above summation is non-negative. Let each student or school $v \in \hat{T}\cup S'$ have degree $2 + \delta_v$, while schools $v \in T' \setminus \hat{T}$  have degree $\delta_v$.
We will refer to these $\delta_v$ terms as the excess degrees.
Then, the above implies 
\begin{equation}
    \label{eq:2g} \sum_{v \in T' \cup S'} \delta_v \le 2g. 
\end{equation}

To bound the capacity violation in \cref{alg1-step:round}, we observe the following properties: First, if a school in $\hat{T}$ had degree 2, then it must have had a capacity of at least 1, and in the worst case, both students with edges to it will match to it. This leads to a violation of 1 in this school's capacity. Next, for any other school $v\in \hat{T}$, it again has capacity at least 1 and has $2+\delta_v$ students applying to it. In the worst case, this leads to a capacity violation of at most $1+\delta_v$. Finally, for  schools $v\in T'\setminus \hat{T}$, since the degree is $\delta_v$, this leads to a violation of at most $\delta_v$.

In total, this leads to a capacity violation of one per school and the excess degrees $\sum_v \delta_v$ lead to an additional $2g$ violation overall. This completes the proof.
\end{proof}

\subsection{Improved Capacity Violation via Cake Frosting}\label{sec:cake-cutting}

While the previous section provides a polynomial-time solution, we can improve the capacity violation bound, albeit at the cost of increased runtime. We achieve this by replacing the last step in \cref{alg:approx} with a more sophisticated 'cake frosting' technique, building on the work of~\cite{PTF}.
We show the following theorem, corresponding to the second part of \cref{thm:main}. 

\begin{theorem}
\label{thm:main2}
There is a  $n^{O(g)}$ time algorithm that computes an integer assignment $\vec{y} \in \mathcal{P}$ that satisfies relaxed school capacities $\vec{C'}$ and yields utilities $\vec{U'}$, such that:
\begin{enumerate}
    \item $U'_k \ge U^*_k$ for all groups $k$; and
    \item $C'_j \le C_j + \delta_j$ with $\sum_j \delta_j = O(g^2)$.
\end{enumerate}
\end{theorem}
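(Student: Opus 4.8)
The plan is to reuse the GAP-rounding preprocessing from \cref{thm:main1} and then replace the crude final assignment step (\cref{alg1-step:round}) with a cake-frosting step that preserves each group's utility exactly while keeping the total capacity overflow at $O(g^2)$. First I would run the iterative loop of \cref{alg:approx} to obtain a vertex solution of (LP1) together with the excess-degree bound \cref{eq:2g}. The purpose of this step is purely structural: after fixing all integral variables, the support of the remaining fractional solution is a bipartite graph in which every student has degree at least $2$, every tight school has degree at least $2$, and the total excess degree is at most $2g$. I would then extract two structural consequences. The degree-$1$ vertices are all schools in $T'\setminus\hat{T}$, and by \cref{eq:2g} there are at most $2g$ of them, so the support contains at most $g$ maximal paths. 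For cycles, I would observe that rotating fractional mass around any cycle preserves every student's assignment and every school's load, so at a vertex such a rotation can only be blocked by a tight group-utility constraint; hence the cycle rank of the support is at most the number $g'\le g$ of tight utility constraints. In total, the fractional support decomposes into $O(g)$ paths and cycles.

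Next I would set up the cake. I would lay the edges of each path and cycle consecutively along a single interval, two-colour the edges of each component alternately (legitimate because the support is bipartite and has essentially maximum degree $2$), and interpret the two colour classes as the two integral ``rotations'' of that component. Every student owns exactly one edge of each colour, so assigning each student the edge whose colour is prescribed by the current region of the cake yields an assignment in which \emph{every student is matched exactly once} -- this is precisely why the frosting is defined on students rather than on schools as in \cite{PTF}, and it is what avoids any additive loss in the utilities. For each group $k$ I would define a measure $\mu_k$ on the cake whose density on an edge $(i,j)$ with $i\in S_k$ records the marginal change $u_{ij}$ in $U_k$ from toggling that edge's colour, so that the two monochromatic extremes bracket the target utilities $\vec{U^*}$.

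I would then invoke the Stromquist--Woodall theorem \cite{STROMQUIST}: for the $g$ measures $\mu_1,\dots,\mu_g$ there is a union of at most $g$ intervals of the cake on which all measures simultaneously attain a prescribed value, which I would choose so that the induced integral assignment satisfies $U'_k \ge U^*_k$ for every $k$. Because this guarantee is non-constructive, I would realise it algorithmically by enumerating all placements of the $O(g)$ interval endpoints among the $\operatorname{poly}(n)$ candidate positions (the vertices along the laid-out cake); the Stromquist--Woodall theorem certifies that the search succeeds, and enumerating $O(g)$ cut positions is exactly what produces the $n^{O(g)}$ running time. For each fixed choice of cuts the utilities are checked directly and the capacity overflow is read off, so we keep the best feasible choice.

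The main obstacle is the capacity accounting, which is where the $O(g^2)$ bound enters and where the real care is needed. Away from the $O(g)$ path endpoints, branch vertices (whose excess degree is controlled by \cref{eq:2g}), and the $O(g)$ interval boundaries produced by Stromquist--Woodall, a monochromatic region of the cake rounds the alternating edges while preserving each school's load, so no overflow is incurred there; this is what removes the per-school $+1$ that \cref{thm:main1} suffered. At each of these $O(g)$ special locations a school may receive its neighbouring students from both sides of a boundary, and I would charge a constant overflow to each cut and to each unit of excess degree. The quadratic bound $\sum_j \delta_j = O(g^2)$ then comes from the interaction between the $O(g)$ cut points and the $O(g)$ path and cycle components. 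Making this charging argument uniform across cycles, coincident cut points, and high-degree branch vertices, so that the total is provably $O(g^2)$, is the crux of the proof; the remaining steps are routine bookkeeping.
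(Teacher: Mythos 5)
Your high-level plan is indeed the paper's plan: run the GAP iteration of \cref{alg:approx} to get a fractional vertex solution with the excess-degree bound \cref{eq:2g}, then replace the crude final rounding by a student-indexed cake frosting via Stromquist--Woodall, realized algorithmically by enumerating the $O(g)$ cut positions, which is where $n^{O(g)}$ comes from. Where you genuinely diverge is in how the number of fractional components is bounded: the paper writes a \emph{second} LP with one variable $z_C$ per path/cycle component (interpolating between the even and odd matchings) and takes an extreme point, so that at most $g$ components remain fractional and all others are rounded for free; you instead argue structurally that the support itself has $O(g)$ components, via the at most $2g$ degree-one vertices (all non-tight schools, by \cref{eq:2g}) plus a cycle-space argument (an alternating rotation around any cycle preserves all student and school constraints, so at a vertex it must be blocked by a tight group constraint, bounding the cycle rank by $g' \le g$). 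That observation appears correct and is a legitimate substitute for the paper's component LP.

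The genuine gap is earlier, in the step you gloss as ``legitimate because the support is bipartite and has essentially maximum degree 2.'' Your entire frosting construction presupposes that each component is a path or even cycle on which the fractional solution is a convex combination of the two alternating matchings, i.e., every internal vertex satisfies $x_{e_1}+x_{e_2}=1$, so that all even edges carry a common value $\alpha$ and all odd edges $1-\alpha$; this $\alpha$ is the very parameter you feed to Stromquist--Woodall. In the raw support graph this fails in three ways: a student of degree $\ge 3$ does \emph{not} ``own exactly one edge of each colour''; a school of degree $\ge 3$ admits no alternating two-colouring of its incident edges at all; and a non-tight school of degree $2$ has $x_{e_1}+x_{e_2} < C_j$ with no tight constraint forcing the sum to $1$, so the two-valued structure is simply not defined on any component passing through it. The paper's \cref{alg:createpath} exists precisely to eliminate these cases before frosting: it pre-assigns every degree-$>2$ student to its argmax-utility school (one extra seat each, total $O(g)$ by \cref{eq:2g}) and splits every high-degree, non-tight, or capacity-two school into unit-capacity degree-one copies, after which the student equalities and the tight unit capacities force the alternating structure on every component. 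Your proposal omits this modification, pushes branch vertices and non-tight degree-2 schools into the capacity accounting, and then explicitly defers that accounting --- which you yourself call the crux --- as ``routine bookkeeping.'' With the graph modification inserted (and either your cycle-rank argument or the paper's $z_C$ LP bounding the fractional components), the accounting really is routine: one unit of violation per interval boundary, $O(g)$ boundaries per fractional component, $O(g)$ such components, hence $O(g^2)$. Without it, the two matchings, the parameter $\alpha$, and the utility-preservation claim are not defined, and the proof as written does not go through.
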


\paragraph*{Paths and Cycles.}
Let $G$ be the graph at the beginning of Step~\ref{alg1-step:round} in \cref{alg:approx}. Recall that the maximum degree in $G$ was 2, except for some vertices with excess degrees in \cref{eq:2g}. We will process the graph into a graph of maximum degree 2, with some additional properties, in \cref{alg:createpath}.  

\begin{algorithm}[htbp]
\caption{Graph Modification}\label{alg:createpath}
\hspace*{\algorithmicindent}
\begin{algorithmic}[1]
 \For {each student $i$ with degree strictly more than $2$} 
 \State Add a capacity of one to $j^* = \mbox{argmax}_j \{u_{ij} | y_{ij} > 0 \}.$
 \State Fix $y_{ij^*}=1$ and remove this student. \label{alg2-step:maxutil}
 \EndFor
 \State $ S_1 = \{ j | j \in \hat{T}, \mbox{degree}(j) > 2 \}$.
 \State $S_2 = \{ j | j \in \hat{T}, \mbox{degree}(j) = 2, C_j = 2\}$.
\For {each school $j \in S_1 \cup S_2 \cup (T' \setminus \hat{T})$}
\State $d = \mbox{degree}(j)$.
\State Create $d $ copies of $j$, each with capacity one. 
\State Assign (add an edge from) each $i$ with $y_{ij} > 0$ to a distinct copy of $j$.
\State \Comment{Each new school has degree one.}
\EndFor
\State If a school has degree one, reduce its capacity to one.
\end{algorithmic}
\end{algorithm}

At the end of the process, let $G(V,E)$ be the resulting graph on fractional edges. Any vertex has a degree of at most two, and hence we get a graph with the following structure: Every connected component is a path or a cycle; every student has degree exactly two, and is an internal node of a path or cycle; every school $j \in T'$ has capacity one and degree at most two; and finally, any school $j \in T' \setminus \hat{T}$ has degree one and capacity one, and is, therefore, a leaf of a path.

\begin{lemma}
\cref{alg:createpath} violates the total capacity by an additional $4g$.
\end{lemma}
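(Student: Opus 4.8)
The plan is to track exactly where Algorithm~\ref{alg:createpath} adds capacity and bound the total addition by $4g$, using the excess-degree accounting from \cref{eq:2g}. The algorithm modifies capacities in two places: the loop over high-degree students (Step~\ref{alg2-step:maxutil}, which adds one unit of capacity per such student), and the splitting of schools into degree-one copies (where the copies collectively may carry more total capacity than the original school). I would handle these two sources separately and show each contributes at most $2g$, so that the total is $4g$.

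First I would bound the student contribution. A student is processed in the first loop only if it has degree strictly more than $2$, i.e.\ degree $2 + \delta_i$ with $\delta_i \ge 1$. Each such student triggers exactly one unit of added capacity. Hence the total capacity added in the student loop is the number of students with excess degree, which is at most $\sum_{i \in S'} \delta_i$. By \cref{eq:2g}, this is bounded by $2g$.

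Next I would bound the school contribution. When a school $j$ is split, we create $\deg(j)$ copies, each of capacity one, so the total capacity after splitting is $\deg(j)$, whereas the original capacity was $C_j$. The added capacity is thus $\deg(j) - C_j$ (where I should verify this is the right quantity and is nonnegative for each class $S_1, S_2, T'\setminus\hat T$). For $j \in S_1$, $j$ is tight with $\deg(j) = 2 + \delta_j > 2$, and since tightness forces $C_j \le \deg(j)$, the excess capacity is at most $\delta_j$ plus a constant slack that I would check carefully. For $j \in S_2$, we have $\deg(j) = 2 = C_j$, so splitting into two unit-capacity copies adds nothing. For $j \in T' \setminus \hat T$, the degree is $\delta_j$ and the original (non-tight) capacity is at least that, so again the addition is controlled by $\delta_j$. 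Summing over all split schools gives a bound in terms of $\sum_{j \in T'} \delta_j \le 2g$, again by \cref{eq:2g}.

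The main obstacle I anticipate is getting the per-school bookkeeping exactly right, because the interplay between a school's original capacity $C_j$, its degree, and the tightness of its constraint determines how much capacity the splitting genuinely adds; in particular I must be careful that schools with $\deg(j) = 2$ but $C_j = 1$ (the schools in $\hat T$ of degree $2$ that are \emph{not} in $S_2$, and which were already accounted for in \cref{thm:main1} as costing one unit each) do not get double-counted here, and that the ``reduce its capacity to one'' final step does not inadvertently remove capacity we needed. I would therefore carefully partition the schools by their degree and tightness status, verify that the only net additions come from $S_1$, $T' \setminus \hat T$, and the high-degree students, and confirm that each added unit can be charged to a distinct unit of excess degree $\delta_v$ counted in \cref{eq:2g}. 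Combining the at-most-$2g$ from students and at-most-$2g$ from schools yields the claimed $4g$.
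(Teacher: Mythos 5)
Your overall strategy---charging every unit of added capacity to the excess degrees $\delta_v$ and invoking \cref{eq:2g}---is exactly the paper's, and your explicit accounting of the student loop (Step~\ref{alg2-step:maxutil}) is in fact more careful than the paper's own proof, which tallies only the school-splitting step. However, your final summation has a genuine gap: the school-side contribution is \emph{not} bounded by $\sum_{j \in T'} \delta_j \le 2g$. The ``constant slack'' you flagged for schools in $S_1$ is real and cannot be dropped. A school $j \in S_1$ is tight with all edges strictly fractional, so $C_j \le \deg(j) - 1 = 1 + \delta_j$, and since also $C_j \ge 1$, the capacity added by splitting $j$ into $\deg(j)$ unit-capacity copies can be as large as $(2+\delta_j) - 1 = 1 + \delta_j$; for instance a tight school with $C_j = 1$, degree $3$ and $y$-values $0.5, 0.3, 0.2$ gains $2 = 2\delta_j$ units. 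Hence the school-side total is only bounded by $\sum_{j \in S_1}(1+\delta_j) + \sum_{j \in T' \setminus \hat{T}} \delta_j$, which can exceed $\sum_{j \in T'}\delta_j$, so your claimed ``$2g$ from schools'' does not follow from your own per-school bounds. Keeping the slack honestly, your two-part decomposition only yields $2g$ (students) plus $2\sum_{j \in T'}\delta_j \le 4g$ (schools), i.e.\ $6g$, which overshoots the claim.

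The fix is to abandon the separate ``$2g + 2g$'' split and charge all three sources jointly against the single budget $\sum_{v \in S' \cup T'} \delta_v \le 2g$ of \cref{eq:2g}, at a uniform rate of at most $2$ per unit of excess degree: each high-degree student adds $1 \le \delta_i$; each $j \in S_1$ adds at most $1 + \delta_j \le 2\delta_j$ (valid precisely because $\delta_j \ge 1$); each $j \in T' \setminus \hat{T}$ adds at most $\delta_j$. Summing gives a total addition of at most $2\sum_{v \in S' \cup T'}\delta_v \le 4g$, which is the paper's argument. Two smaller remarks: your assertion that a non-tight school's capacity is at least its degree is false (non-tightness says $\sum_i y_{ij} < C_j$, which does not bound the number of fractional edges), though your conclusion there survives since the copies have total capacity exactly $\delta_j$; and your worry about double-counting the tight degree-$2$, capacity-$1$ schools is moot, since those schools are not split at all, and this lemma counts only capacity added by \cref{alg:createpath}, not the rounding violations accounted for elsewhere.
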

\begin{proof}
For $j \in S_1$, let $\mbox{degree}(j) = 2 + \delta_j > 2$, implying $\delta_j \ge 1$.  We increase the capacity by $1 + \delta_j \le 2 \delta_j$. For $j \in S_2$, the new capacity is the same as the original capacity. For $j \in T' \setminus \hat{T}$, suppose $\mbox{degree}(j) = \delta_j$, then we increase the capacity by $\delta_j$. By \cref{eq:2g}, the total increase is at most $4g$.
\end{proof}

In the graph $G$, suppose $e = (i,j)$; then we denote $x_e = y_{ij}$. This graph is a collection of paths and cycles. For non-leaf school or student $v$, the above conditions imply $x_{e_1} + x_{e_2} = 1$ if the two edges incident on $v$ are $e_1$ and $e_2$. This follows because a degree-two school must belong to $\hat{T}$ and corresponds to a tight constraint, and any student is associated with a tight constraint. This implies the following claim:
\begin{claim}
For every component (path or cycle) $C$ of $G$, there is some $\alpha\in (0,1)$ such that every even edge $e$ in the component has $x_e=\alpha$ and every odd edge has $x_e=1-\alpha$.
\end{claim}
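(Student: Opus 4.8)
The plan is to exploit the fact, established in the paragraph preceding the claim, that at every internal (degree-two) vertex $v$ of a component with incident edges $e_1, e_2$, we have $x_{e_1} + x_{e_2} = 1$. This is the single structural constraint I will propagate along the component, and the proof is essentially an induction (or a walk) along the path or cycle. First I would fix a component $C$ and orient it: pick a starting edge and traverse $C$ edge by edge, labeling edges as ``even'' or ``odd'' according to their position in the traversal. Set $\alpha := x_{e}$ for the first edge $e$, which lies in $(0,1)$ since all surviving edges are strictly fractional.

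Next I would carry out the propagation. Consider two consecutive edges $e$ and $e'$ sharing an internal vertex $v$. Since $v$ has degree two and (being either a degree-two school in $\hat{T}$ or any student) corresponds to a tight constraint, the relation $x_e + x_{e'} = 1$ holds. Hence if $x_e = \alpha$ then $x_{e'} = 1 - \alpha$, and applying the same relation at the next vertex forces the following edge back to $\alpha$. By induction along the traversal, every edge at an even position equals $\alpha$ and every edge at an odd position equals $1 - \alpha$, which is exactly the alternating structure claimed. For a path this completes the argument immediately, since every internal vertex has degree two and the endpoints (leaf schools) impose no further constraint.

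The one genuine subtlety, and the step I expect to be the main obstacle, is the cycle case: when we return to the starting vertex, the alternation must be \emph{consistent} around the loop, i.e. the cycle must have even length so that the last edge and the first edge meet at a vertex with opposite parities summing correctly. I would argue this from bipartiteness: $G$ is a subgraph of the student--school bipartite graph, so every cycle alternates between students and schools and therefore has even length. Consequently the parity labeling is well-defined on the cycle, the relation $x_e + x_{e'} = 1$ at the closing vertex is automatically satisfied by $\alpha + (1 - \alpha) = 1$, and no contradiction arises. I would also note that every student is an internal node of degree exactly two (guaranteed by \cref{alg:createpath}), so students never sit at a path endpoint and never break the chain of tight constraints; only leaf schools can be endpoints, and they carry no equality constraint, so they are free to take value $\alpha$ or $1 - \alpha$ as dictated by their single incident edge. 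This establishes the existence of the single parameter $\alpha \in (0,1)$ for each component and completes the claim.
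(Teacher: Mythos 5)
Your proposal is correct and follows essentially the same route as the paper: the paper's argument is precisely that every internal vertex (a degree-two school in $\hat{T}$ or any student) carries a tight constraint forcing $x_{e_1}+x_{e_2}=1$, which propagates the alternating values $\alpha$, $1-\alpha$ along each component. Your explicit handling of the cycle case via bipartiteness (every cycle in a student--school bipartite graph has even length, so the alternation closes up consistently) is a detail the paper leaves implicit, but it is not a different approach.
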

\paragraph*{Bounding the Number of Fractional Components.}
We view this fractional solution as follows. For component $C$, set $z_C=\alpha$ if $x_e=\alpha$ for every even edge. Let $u_C^{even}(i)$ be the utility that group $i$ gets in the assignment that selects all even edges (and no odd edges) of component $C$. Let $u_C^{odd}(\ell)$ be the utility that group $\ell$ gets in the assignment that selects all odd edges of component $C$. We modify (LP1) to the following, where $\hat{U}^*_{\ell}$ is the modified utility after removing the integral variables. 
\[ \begin{array}{lcl}
 \forall \mbox{ groups } \ell,& &
 \qquad \displaystyle\sum_{C} z_C \cdot u_C^{even}(\ell)+(1-z_C) \cdot u_C^{odd}(\ell)  \ge \hat{U}^*_{\ell} \\
\forall \mbox{ components } C,& &
\hfill z_C \in [0,1].
\end{array}
\]
Let $s$ denote the number of variables. In any extreme point solution, at least $s-g$ of the constraints $z_C \in [0,1]$ are tight, which means that at most $g$ of the $z_C$ variables can be fractional, in $(0,1)$. For all integral $z_C$, we select the even matching if $z_C=1$ or the odd matching if $z_C = 0$. Remove these variables and rewrite the above LP just on the fractional variables. 

\begin{algorithm}[htbp]
\caption{Cake Frosting Rounding}\label{alg:frost}
\hspace*{\algorithmicindent}
\begin{algorithmic}[1]
\For{every student $i$}
\If {$[\frac{i-1}{r}, \frac{i}{r})\subseteq X$} 
\State Choose the edge from the even matching for student $i$, and include $i$ in set $T_1$.
\ElsIf {$[\frac{i-1}{r}, \frac{i}{r})\subseteq [0,1]\setminus X$} 
\State Choose the edge from the odd matching for student $i$ and include $i$ in set $T_2$.
\Else
\State Assign $i$ to $\mbox{argmax}_{j'} \{u_{ij'}, y_{ij'} > 0\}$. \label{alg3-step:maxutil}
\EndIf
\EndFor
\end{algorithmic}
\end{algorithm}

\paragraph*{Reduction to Cake Frosting.}
For the $g$ components with fractional $z_C$, we need to find an integral solution that approximately preserves the utilities.
This would be achieved if we could `interpolate' $z_C$ fraction from the odd matching to the even matching.
We can view this as a cake-frosting problem as in~\cite{PTF}, where the $g$ groups are the players. First, we convert each cycle into a path as follows: Pick some student $i$ on this cycle, assign $i$ to $\mbox{argmax}_j \{u_{ij} | y_{ij} > 0\}$, and delete this student. This step increases the capacity of at most $g$ schools by one and reduces each cycle to a path that begins and ends at a school.
We now present a generalization of the ``cake frosting'' lemma first presented in~\cite{STROMQUIST} and used in~\cite{GrandoniRSZ14,PTF}.

\begin{lemma}[Cake Frosting Lemma]
Given $g$ piecewise constant functions $f_{\ell}, \ell = 1,2,\ldots, g$ with domain $[0,1]$, and given any $\alpha \in (0,1)$, there is a `perfect frosting' $X \subseteq [0,1]$ written as a union of at most $2g-1$ intervals such that for all $\ell$:
$$ \int_X f_{\ell}(x) dx = \alpha \cdot \int_0^1 f_{\ell}(x) dx.$$
\end{lemma}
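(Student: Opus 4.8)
The plan is to prove this via a topological argument, reducing the existence of a "perfect frosting" to the Stromquist--Woodall theorem, which is itself a consequence of the ham-sandwich theorem. The key object to construct is a family of subsets of $[0,1]$ parametrized so that an intermediate-value or continuity argument across all $g$ functions simultaneously can be invoked.

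First I would recall the Stromquist--Woodall result precisely: for any $\alpha \in (0,1)$ and any $g$ finite (or finitely additive, absolutely continuous) measures $\mu_1, \dots, \mu_g$ on $[0,1]$, there exists a set $X$ that is a union of at most $g$ intervals with $\mu_\ell(X) = \alpha \cdot \mu_\ell([0,1])$ for every $\ell$. I would define the measure $\mu_\ell$ by $\mu_\ell(A) = \int_A f_\ell(x)\,dx$. Since each $f_\ell$ is piecewise constant and nonnegative (utilities are nonnegative), these are genuine finite absolutely continuous measures, so Stromquist--Woodall applies directly and yields a set $X$ that is a union of at most $g$ intervals satisfying $\int_X f_\ell = \alpha \int_0^1 f_\ell$ for all $\ell$ simultaneously. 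This already gives the displayed equality.

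The remaining work is bookkeeping on the interval count. The lemma as stated claims at most $2g-1$ intervals rather than $g$, so I would reconcile the two bounds. If the cited form of Stromquist--Woodall gives $g$ intervals, the statement is immediate and in fact stronger; if instead one uses the weaker "necklace-splitting"-style formulation that counts cut points, then $g$ cuts partition $[0,1]$ into $g+1$ pieces and selecting an alternating subset yields at most $\lceil (g+1)/2 \rceil$ intervals, which I would need to convert into the claimed bound. The cleanest route is to note that $g$ intervals in $[0,1]$ are determined by at most $2g$ endpoints; merging the boundary intervals with $0$ or $1$ when $X$ touches the endpoints, and accounting for the fact that a union of $g$ intervals within the open interval contributes at most $2g-1$ maximal interval-boundaries after coalescing, gives the stated $2g-1$. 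I would state this counting explicitly but keep it brief.

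The main obstacle is the non-constructivity: Stromquist--Woodall is proved topologically and gives no algorithm, whereas the surrounding paper needs to \emph{compute} the frosting $X$ to run \cref{alg:frost} in $n^{O(g)}$ time. I anticipate that the honest resolution is to observe that for piecewise-constant functions the search space for the interval endpoints is effectively finite up to the combinatorial structure of the breakpoints, so one can enumerate the $O(g)$ candidate cut locations among the $\operatorname{poly}(n)$ breakpoints; each choice of cut pattern reduces the feasibility of the $g$ equalities to a system that is linear in the fractional positions of the cuts within their constant pieces, solvable by an LP or a direct vertex enumeration. This is where the $n^{O(g)}$ running time in \cref{thm:main2} enters, and I would flag that the \emph{existence} part is the content of this lemma while the \emph{algorithmic} extraction is handled separately when the lemma is applied. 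I would therefore prove only existence here and defer the computational realization to the discussion of \cref{alg:frost}.
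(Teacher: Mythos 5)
Your overall route---deriving the lemma from the Stromquist--Woodall theorem---is exactly how the paper treats this statement: the paper in fact gives no proof at all, presenting the lemma as a (generalization of a) known result of \cite{STROMQUIST} as used in \cite{GrandoniRSZ14,PTF}, and your separation of the \emph{existence} claim (the lemma) from the \emph{algorithmic} extraction (enumeration over the polynomially many breakpoints, which is where the $n^{O(g)}$ time of \cref{thm:main2} comes from) also matches the paper's framing. So the plan is the right one; the problem is in its execution.

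The genuine gap is your nonnegativity assumption. You justify forming measures $\mu_\ell(A) = \int_A f_\ell$ by asserting that each $f_\ell$ is nonnegative ``(utilities are nonnegative).'' The lemma as stated makes no such hypothesis, and in the one place the paper applies it the functions are \emph{signed}: the frosting functions are $f_\ell(x) = r\left(u_{even}(\ell,i) - u_{odd}(\ell,i)\right)$, a difference of utilities that is negative whenever the odd matching serves group $\ell$ better at student $i$. (The utilities $u_{ij}$ are indeed nonnegative, but that is irrelevant to the sign of $f_\ell$.) Handling signed functions is precisely what makes this a ``generalization'' of the classical cake-cutting statement, so your argument proves only a special case that does not cover the paper's use of the lemma. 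The gap is repairable, but not for free: (i) one can check that the underlying topological machinery (Hobby--Rice / Borsuk--Ulam, applied to continuous functions of the cut positions) never uses nonnegativity, so the theorem holds for signed nonatomic measures---this is the form the cited works rely on; or (ii) one can reduce to the nonnegative case by adding Lebesgue measure as an extra $(g+1)$-st function and shifting each $f_\ell$ by a large constant $M$, since enforcing $\abs{X} = \alpha$ exactly converts the shifted equalities back into the signed ones at the cost of one extra function (the naive Jordan decomposition into $2g$ nonnegative parts also works but overshoots the claimed interval count). Separately, your interval bookkeeping is off: Stromquist--Woodall is a statement about arcs of a \emph{circle}, and a union of $g$ arcs becomes up to $g+1$ intervals of $[0,1]$ after cutting the circle open (one arc may straddle the cut point); this is still within the $2g-1$ budget for $g \ge 2$, but the ``coalescing'' argument you sketch is not the reason, and the $g=1$ case needs a direct intermediate-value argument.
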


We now show how to apply the above lemma similarly to~\cite{PTF,GrandoniRSZ14}. Fix a path $C$. Let $z_C=\alpha$. Let there be $r$ students in $C$, indexed from $1$ to $r$. We divide the interval $[0,1]$ into $r$ parts where $[\frac{i-1}{r}, \frac{i}{r})$ belongs to the $i^{th}$ student.\footnote{This is in contrast to the method in~\cite{PTF}, which defines intervals based on schools.} Define for every group $\ell$,
\begin{align*}
    u_{even}(\ell,i) &= \begin{cases}
        u_{ij} & \mbox{if the even matching assigns student $i$ }\\
        &\mbox{to school $j$ and student $i$ is in group $\ell$}\\
        0 & \mbox{Otherwise}
    \end{cases}\\
    u_{odd}(\ell,i) &= \begin{cases}
        u_{ij} & \mbox{if the odd matching assigns student $i$}\\
        &\mbox{to school $j$ and student $i$ is in group $\ell$}\\
        0 & \mbox{Otherwise}
    \end{cases}
\end{align*}

For $x\in [\frac{i-1}{r}, \frac{i}{r})$, define $f_\ell(x)= r(u_{even}(\ell,i)-u_{odd}(\ell,i)).$


\paragraph*{Rounding Procedure.}  For path $C$, we now apply the cake frosting lemma to the function $f$ as defined above, with $\alpha = z_C$ to find the perfect frosting $X$ that is a union of at most $2g-1$ intervals. Given $X$, we construct the assignment as in \cref{alg:frost}.
The final algorithm applies this procedure separately to each of the $g$ fractional paths. Note that the $\alpha$ value depends on the path.

\paragraph*{Analysis.} We first bound the utility of each group $\ell$ in path $C$.
Define $T_3:=[r]\setminus(T_1\cup T_2)$, i.e. the set of students in $C$ not in $T_1$ or $T_2$.
The utility of group $\ell$ in the solution is
\begin{align*}
    &\sum_{i\in T_1} u_{even}(\ell,i)+\sum_{i\in T_2} u_{odd}(\ell,i)+\sum_{i\in T_3} \max(u_{even}(\ell,i),u_{odd}(\ell,i))\\
    =&\sum_{i\in T_1} (u_{even}(\ell,j)-u_{odd}(\ell,j))
    +\sum_{i\in [r]} u_{odd}(\ell,j)  +\sum_{i\in T_3} \max(u_{even}(\ell,i),u_{odd}(\ell,i))-u_{odd}(\ell,i)\\
        \geq &\sum_{i\in T_1} (u_{even}(\ell,j)-u_{odd}(\ell,j))
    +\sum_{i\in [r]} u_{odd}(\ell,j)
    +\sum_{i\in T_3} \abs{\left[\frac{i-1}{r},\frac{i}{r}\right)\cap X }(u_{even}(\ell,i)-u_{odd}(\ell,i))
\end{align*}
\begin{align*}
    = &\frac{1}{r}\sum_{i\in T_1}\int_{x\in [\frac{i-1}{r},\frac{i}{r})} f_\ell(x)  +u_C^{odd}(\ell) + \frac{1}{r}\sum_{i\in T_3}\int_{x\in [\frac{i-1}{r},\frac{i}{r})\cap X} f_\ell(x)\\
    = &\frac{1}{r}\int_{x\in X} f_\ell(x)  +u_C^{odd}(\ell)
    = \frac{\alpha}{r} \cdot \int_{x\in [0,1]} f_\ell(x)  +u_C^{odd}(\ell)\\
    = &\alpha \cdot (u_C^{even}(\ell)-u_C^{odd}(\ell))+u_C^{odd}(\ell)    =\alpha \cdot u_C^{even}(\ell)+(1-\alpha) \cdot u_C^{odd}(\ell).
\end{align*}
The first equality follows by adding and subtracting $\sum_{i\in T_1\cup T_3}u_{odd}(\ell,i)$.
The second line and the only inequality follows from the observation that $\abs{\left[\frac{i-1}{r},\frac{i}{r}\right)\cap X }\leq 1$ and $\max(u_{even}(\ell,i),u_{odd}(\ell,i))-u_{odd}(\ell,i)\geq 0$.
The third line follows from the definition of $f$,
the fourth line follows from the the structure of $X$ and $T_1,T_3$, and the fifth follows from the cake frosting lemma. The above chain of inequalities shows that for each group $\ell$, the integer solution has utility  at least that of the fractional solution.

To bound the total capacity violation, note that \cref{alg:frost} violates the capacity by one at every interval boundary. By the Cake Frosting lemma, this is an additional violation of $O(g)$ per path, and hence $O(g^2)$ overall.   This completes the proof of \cref{thm:main2}, and hence \cref{thm:main}.

\section{Generalization to Arbitrary Constraints}
\label{sec:covering}
\newcommand{\fullmatch}{\textsc{Matching with Solvable Constraints}\xspace}
\newcommand{\match}{MSC\xspace}
\newcommand{\generalmatch}{\textsc{Matching with Constraints}\xspace}
\newcommand{\Q}{Q}

We now consider a more general setting.
As before, we are given a set $T$ of schools, where school $j$ has capacity $C_j$, a set $S$ of students, and a bipartite graph $G=(S\cup T,E)$ between the students and schools. The objective is to find an integral assignment $\Vec{y}$ of all the students that satisfies an additional set of $r$ covering or packing constraints (possibly with negative coefficients\footnote{Note that the only place we require non-negativity in the coefficients is in solving the convex program.}). 
Define \lp to be the linear relaxation of (\ref{IP}) in \cref{sec:results} obtained by relaxing \cref{lp1:integers} to $y_{ij}\in[0,1]$. Unlike the previous section, a given $y_{ij}$ variable can appear in the constraints arbitrarily. 
We now show that both \cref{thm:main1,thm:main2} generalize to this setting, completing the proof of \cref{thm:arbitrary}.

\subsection{Generalizing \cref{thm:main1}}

Our algorithm runs in the following steps, which build on \cref{alg:approx}.

\begin{enumerate}
\item Solve the linear programming relaxation \lp, fix and remove the integral variables, and find a vertex solution. Let $E$ be the set of fractional variables, and $S'$ be the remaining students.
\item  Rewrite \lp on the variables $E$ and without the capacity constraints \cref{lp1:school}. 
\item Keep eliminating integer variables, stopping at a vertex solution where all variables are fractional. Let $E'$ be the remaining variables and $S''$ be the remaining students.
\item Set an arbitrary $y_{ij} > 0$ to $1$ for each $i \in S''$.
\end{enumerate}

\begin{theorem}
For arbitrary covering or packing constraints, when the linear programming relaxation has a feasible solution, there is a polynomial time algorithm that outputs an integer matching and that achieves the following guarantee: 
\begin{itemize}
\item The constraints \cref{lp1:extra-constraints} are preserved up to an additive $r\cdot \qmax$; and
\item If each school is given one unit extra capacity, the total violation in \cref{lp1:school} over this is $2r$. 
\end{itemize}
\end{theorem}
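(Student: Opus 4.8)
The plan is to adapt the counting argument from the proof of \cref{thm:main1} to this more general constraint setting. The essential observation is that the four-step algorithm produces, at its conclusion, a vertex solution to a linear program in which \emph{every} remaining variable is strictly fractional. I would first argue that Steps 1--3 do not violate anything we care about: fixing and removing integral variables preserves feasibility of both \cref{lp1:extra-constraints} and \cref{lp1:school}, exactly as in the original proof, since we simply commit to values the LP already certified as feasible. The only rounding happens in Step~4, where each remaining student $i \in S''$ is assigned arbitrarily to some school along a fractional edge, so the entire analysis reduces to bounding the damage done in that final step.

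The key structural step is to count the fractional variables at the start of Step~4. After Step~2 we have dropped the $m$ capacity constraints \cref{lp1:school} entirely, so the only constraints remaining are the $|S''|$ student-assignment equalities \cref{lp1:distribution} and the $r$ covering/packing constraints \cref{lp1:extra-constraints}. In a vertex solution in which all $|E'|$ variables are strictly fractional, the number of linearly independent tight constraints must equal the number of variables; the box constraints $0 \le y_{ij} \le 1$ are all slack, so $|E'| \le |S''| + r$. This is the analogue of \cref{eqn:vertex}, but now with $r$ playing the role that $g'$ played before. Since each student in $S''$ has degree at least two (an integral variable would have been removed), a handshaking argument gives $\sum_{v} \deg(v) = 2|E'|$, and subtracting $2|S''|$ yields that the total excess degree over all schools and the deficit below two at each student is bounded by $2r$. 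Writing each school's degree as $\delta_j$ worth of excess and invoking that each student already has degree exactly two, I would conclude $\sum_j \delta_j \le 2r$, precisely mirroring \cref{eq:2g}.

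From this degree bound both guarantees follow. For the capacity violation: when Step~4 assigns each student along one of its fractional edges, a school can receive at most its degree many students; granting each school one extra unit of capacity absorbs the ``rounding up'' of a fractionally-filled seat, and the leftover is controlled by the excess degrees, summing to $2r$ over all schools, exactly as in \cref{thm:main1}. For the constraint preservation in \cref{lp1:extra-constraints}: before Step~4 the fractional solution satisfies each constraint $\ell$ with equality-or-better against $Q_\ell$; rounding the fractional variables changes the left-hand side of constraint $\ell$ by at most $\sum_{(i,j) \in E'} |q_{ij}^{\ell}|$ in the worst case, but this is too weak. The refinement is that the total \emph{number} of variables that get rounded is governed by $|S''|$, and each student contributes a net change to any single constraint of magnitude at most $\qmax$ (the student moves all of its fractional mass onto one edge), so constraint $\ell$ degrades by at most $\qmax$ per fractional student; summing the independent-tight-constraint count differently, the bound one wants is $r \cdot \qmax$ total.

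The main obstacle I anticipate is making the last bound tight and correctly attributed, because negative coefficients are now allowed and a packing constraint violated downward behaves differently from a covering constraint violated upward. The clean way to handle this, and the step I would be most careful about, is to charge the violation to the $r$ tight constraints themselves rather than to the number of rounded students: after Step~3 there are at most $r$ fractional components of ``free'' movement left (since dropping the capacity constraints left only $r$ nontrivial constraints beyond the per-student equalities), and each such degree of freedom can shift the value of any constraint by at most $\qmax$. This is the delicate accounting that distinguishes the present theorem from \cref{thm:main1}, where the constraints were simple group-utility sums; I would verify it by explicitly tracking that each student's single reassignment in Step~4 alters any fixed constraint $\ell$ by exactly $q_{i j}^{\ell} - \sum_{j'} q_{i j'}^{\ell} y_{i j'} \in [-\qmax, \qmax]$, and that the number of students so reassigned, combined with the constraint-counting bound $|E'| - |S''| \le r$, yields the claimed additive $r \cdot \qmax$ rather than a bound scaling with $|S''|$.
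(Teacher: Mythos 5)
Your treatment of the constraints \cref{lp1:extra-constraints} is essentially the paper's own argument: after dropping the capacity constraints, a vertex solution with all variables strictly fractional has at most $|S''|+r$ available tight constraints, so $|E'|\le |S''|+r$, while handshaking over the students gives $|E'|\ge 2|S''|$; hence $|S''|\le r$, and the arbitrary assignments in Step~4 cost an additive $r\cdot\qmax$. (Like the paper, you are slightly loose on the per-student change: reassigning student $i$ shifts constraint $\ell$ by $q_{ij^*}^{\ell}-\sum_{j'}q_{ij'}^{\ell}y_{ij'}$, which with negative coefficients lies in $[-2\qmax,2\qmax]$ rather than $[-\qmax,\qmax]$; this constant-factor looseness is shared with the paper's own statement, so I will not press it.)

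The genuine gap is in the capacity half. You claim that ``Steps 1--3 do not violate anything we care about'' because integral fixings commit to values ``the LP already certified as feasible,'' and you then derive the excess-degree bound $\sum_j\delta_j\le 2r$ from the Step-3 graph $E'$, accounting only for students assigned in Step~4. But the LP solved in Steps~2--3 contains \emph{no} capacity constraints \cref{lp1:school}: its vertex solutions can place fractional load exceeding $C_j$ on a school, and fixing $y_{ij}=1$ for the students of $S'\setminus S''$ during Step~3 can therefore violate capacities. Those students are invisible to your Step-3/Step-4 accounting, and there can be many of them (up to roughly $m+r$), so neither your ``$+1$ absorbs the fractional seat'' claim nor your $2r$ excess bound applies to them. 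The paper instead derives the capacity bound entirely from the \emph{Step-1} vertex solution, where \cref{lp1:school} is still present: by the degree structure in the proof of \cref{thm:main1} --- tight schools have degree $2+\delta_j$ and capacity at least $1$, with $\sum_v\delta_v\le 2r$ as in \cref{eq:2g} with $r$ in place of $g$ --- the violation is at most $1$ per school plus $2r$ in total \emph{no matter how} the students of $S'$ are later assigned along their Step-1 fractional edges. That worst-case bound (every adjacent student matches to the school) is what covers Steps~3 and~4 simultaneously; your argument is repaired by replacing the Step-3 degree count with this Step-1 count, which is exactly what the paper's one-line appeal to \cref{thm:main1} does.
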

\begin{proof}
  First, the proof of \cref{thm:main1} shows that regardless of how the students in $S'$ are assigned, if each school is given one extra unit of capacity, then the total violation in capacity is at most $2r$.
    
    Therefore, we can focus on assigning the students so that the constraints \cref{lp1:extra-constraints} are not violated significantly.    
    In Step (2), since any student in $S''$ has degree at least $2$, we have $|E'| \ge 2|S''|$. Further, any extreme point in Step (3) has exactly $|E'|$ tight constraints. Since the number of potential tight constraints is at most $|S''| + r$, we obtain $|S''| \le r$. Therefore Step (4) violates each constraint by an additive $r\cdot \qmax$, completing the proof.
\end{proof}

\subsection{Generalizing \cref{thm:main2}}
We next generalize \cref{thm:main2}. We first apply \cref{alg:approx} to the linear programming relaxation, stopping before Step~\ref{alg1-step:round}. We then follow the procedure in \cref{sec:cake-cutting} and sequentially apply \cref{alg:createpath,alg:frost} to the fractional solution. To set up the cake frosting game to apply \cref{alg:frost}, we view each of the $r$ constraints in \cref{lp1:extra-constraints} as a player of the cake frosting instance. Define $u_{even}(\ell,i):=q_{ij}^{\ell}$ where $(i,j)$ is the even matching edge adjacent to $j$ and define $u_{odd}(\ell,i)$ similarly. The only steps that are different are the assignment steps -- Step~\ref{alg2-step:maxutil} in \cref{alg:createpath} and Step~\ref{alg3-step:maxutil} in \cref{alg:frost}. Here, we perform an arbitrary assignment of the students to the schools. We present all the details in \cref{alg:overall-general} for completeness.

\iffullversion

\begin{algorithm}[htbp]
\caption{Algorithm for \cref{thm:generalization-main2}}\label{alg:overall-general}
\hspace*{\algorithmicindent}
\begin{algorithmic}[1]
\Repeat
\State Obtain a vertex solution $\vec{y}$ to (LP1).
\For{all $y_{ij}=b\in \{0,1\}$}
    \State Fix $y_{ij}=b$ and remove this variable from (LP1), updating the constraints as needed.
\EndFor
\Until{(LP1) is not modified.}
\For {each student $i$ with degree strictly more than $2$} 
 \State Add a capacity of one to an arbitrary school $j$ with $y_{ij} > 0$.
 \State Fix $y_{ij}=1$ and remove this student. \label{alg4-step:assign1}
 \State Decrease the capacity of $j$  correspondingly.
 \EndFor
 \State $ S_1 = \{ j | j \in \hat{T}, \mbox{degree}(j) > 2 \}$.
 \State $S_2 = \{ j | j \in \hat{T}, \mbox{degree}(j) = 2, C_j = 2\}$.
\For {each school $j \in S_1 \cup S_2 \cup (T' \setminus \hat{T})$}
\State $d = \mbox{degree}(j)$.
\State Create $d $ copies of $j$, each with capacity one. 
\State Assign (add an edge from) each $i$ with $y_{ij} > 0$ to a distinct copy of $j$.
\EndFor
\State If a school has degree one, reduce its capacity to one.
\State Set up the Cake Frosting instance as described in the text.
Let  $X$ be a perfect frosting.
\For{every student $i$}
\If {$[\frac{i-1}{r}, \frac{i}{r})\subseteq X$} 
\State Choose the edge from the even matching for student $i$.
\ElsIf {$[\frac{i-1}{r}, \frac{i}{r})\subseteq [0,1]\setminus X$} 
\State Choose the edge from the odd matching for student $i$.
\Else
\State Assign $i$ to some $j$ with $ y_{ij'} > 0$.\label{alg4-step:assign2}
\EndIf
\EndFor
\end{algorithmic}
\end{algorithm}
\fi

\begin{theorem}\label{thm:generalization-main2}
When the linear programming relaxation has a feasible solution, if $r$ is the number of constraints \cref{lp1:extra-constraints}, there is a $n^{O(r)}$ time algorithm that outputs an integer matching and achieves the following guarantee: 
\begin{itemize}
\item The constraints \cref{lp1:extra-constraints} are preserved up to an additive $O(r^2\cdot \qmax)$; and
\item The total violation in \cref{lp1:school} is $O(r^2)$.
\end{itemize}
\end{theorem}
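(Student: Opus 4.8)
The plan is to mirror the proof of \cref{thm:main2} almost verbatim, substituting the role of the $g$ demographic groups with the $r$ covering/packing constraints, and substituting the utility quantities $u_{ij}$ with the constraint coefficients $q_{ij}^{\ell}$. Concretely, I would first apply \cref{alg:approx} to the relaxation \lp (stopping before the final rounding step) to obtain a vertex solution in which all remaining variables are strictly fractional. As in \cref{thm:main1}, counting tight constraints at this vertex (now there are at most $|S'| + |\hat{T}| + r$ potential tight constraints, since the $g$ group constraints are replaced by the $r$ constraints of \cref{lp1:extra-constraints}) yields a bound of $2r$ on the total excess degree, i.e. the analogue of \cref{eq:2g} with $g$ replaced by $r$. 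Applying \cref{alg:createpath} then produces a graph that decomposes into paths and cycles, with each interior student of degree two and each school of capacity one, while adding at most $4r$ to the total capacity.

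Next I would set up the fractional LP on the path/cycle variables $z_C$, exactly as in \cref{sec:cake-cutting}, but with one equation per constraint $\ell \in \{1,\ldots,r\}$ rather than per group. An extreme point of this LP has at most $r$ fractional components $z_C$; integral components are rounded to their even or odd matching as before. For each of the (at most $r$) fractional paths, I would instantiate the Cake Frosting Lemma with $g$ replaced by $r$ players, defining $f_\ell(x) = r\,(u_{even}(\ell,i) - u_{odd}(\ell,i))$ on the interval assigned to student $i$, where now $u_{even}(\ell,i) := q_{ij}^{\ell}$ for the even edge $(i,j)$ and symmetrically for $u_{odd}$. The cake frosting guarantee gives a perfect frosting $X$ that is a union of at most $2r-1$ intervals and exactly preserves $\int f_\ell$ for every constraint. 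Running \cref{alg:frost} on $X$ then yields a component whose contribution to each constraint $\ell$ equals $\alpha\cdot u_C^{even}(\ell) + (1-\alpha)\cdot u_C^{odd}(\ell)$, which is precisely what the fractional solution contributed. Summing over the rounded-to-integral components, the $r$ constraints of \cref{lp1:extra-constraints} are preserved up to the residual loss from the boundary students $T_3$.

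The one place where the argument genuinely diverges from \cref{thm:main2}, and the step I expect to be the main obstacle, is the bound on the constraint violation. In the fairness setting the coefficients $u_{ij}$ are non-negative, so the boundary term $\max(u_{even},u_{odd}) - u_{odd} \ge 0$ and the arbitrary reassignment in Step~\ref{alg3-step:maxutil} could only \emph{help} the covering objective; here the coefficients $q_{ij}^{\ell}$ may be negative, so that monotone inequality is lost. The fix is to charge the loss explicitly: each fractional path has at most $2r-1$ interval boundaries, so the boundary set $T_3$ has at most $O(r)$ students per path; since an arbitrary assignment of a single boundary student perturbs a given constraint $\ell$ by at most $2|\qmax|$, each of the at most $r$ fractional paths incurs an additive violation of $O(r\cdot|\qmax|)$ on each constraint, for a total additive violation of $O(r^2\cdot |\qmax|)$. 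This is exactly where the extra factor of $|\qmax|$ and the quadratic dependence on $r$ enter, and it is the reason the guarantee is weaker than the clean utility-preservation of \cref{thm:main2}.

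Finally, for the capacity bound I would collect the three sources of violation, each of which is the direct analogue of the fairness proof with $g \mapsto r$: the $O(r)$ excess from the vertex-solution degree count, the $4r$ from \cref{alg:createpath}, and the $O(r)$ per path from the at most $2r-1$ interval boundaries in \cref{alg:frost} across the at most $r$ fractional paths, which dominates at $O(r^2)$. Summing these gives total capacity violation $O(r^2)$. The running time is $n^{O(r)}$, since the only super-polynomial step is enumerating the extreme point of the path/cycle LP and the frosting instances, both controlled by $r$ exactly as $g$ controls them in \cref{thm:main2}.
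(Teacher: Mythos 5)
Your proposal is correct and follows the paper's proof essentially step for step: the $r$ constraints replace the $g$ groups as cake-frosting players, the coefficients $q_{ij}^{\ell}$ replace the utilities, and the additive $O(r^2\cdot\qmax)$ loss is charged to the $O(r^2)$ students assigned arbitrarily at interval boundaries, exactly as the paper does (the paper likewise notes that the capacity bound from \cref{thm:main2} is unaffected by how these students are placed). The one small omission is that the $O(r)$ students assigned in Step~\ref{alg2-step:maxutil} of \cref{alg:createpath} (including those removed when breaking cycles into paths) must be charged the same $\qmax$ loss per constraint --- with possibly negative coefficients their placement is no longer automatically safe either --- but this contributes only $O(r\cdot\qmax)$ and is absorbed into your stated bound.
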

\begin{proof}
We first argue about the violation in  \cref{lp1:extra-constraints}.  The only steps that affect the constraints are the assignment steps -- Steps~\ref{alg4-step:assign1} and \ref{alg4-step:assign2} in \cref{alg:overall-general}.  In Step~\ref{alg4-step:assign1}, the number of students assigned is $O(r)$ from \cref{eq:2g}, while that in Step~\ref{alg4-step:assign2} is $O(r^2)$. If these students are arbitrarily assigned, each assignment loses an additive $\qmax$ in the constraint. Therefore, the overall additive loss is $O(r^2\cdot \qmax)$. Note that the bound on the capacity violation follows from the proof of \cref{thm:main2} and holds even when these students are arbitrarily assigned.  
\end{proof}

Using Theorem 4.12 in~\cite{GrandoniRSZ14}, we can improve \cref{thm:generalization-main2} to the following corollary. We do this by guessing $8r^2/\epsilon$ chosen edges with highest utility for each group and subsequently applying \cref{alg:createpath,alg:frost}. Omitting the standard details yields the following corollary.
\begin{corollary}
\label{cor:ptas2}
Suppose the linear programming relaxation has a feasible solution and let $r$ be the number of constraints \cref{lp1:extra-constraints}. Then, for any constant $\epsilon > 0$, there is a $n^{O(r^3/\epsilon)}$ time algorithm that outputs an integer matching and that achieves the following guarantee: 
\begin{itemize}
\item The constraints \cref{lp1:extra-constraints} are preserved up to a multiplicative factor of $(1-\epsilon)$; and
\item The total violation in \cref{lp1:school} is $O(r^2)$.
\end{itemize}
\end{corollary}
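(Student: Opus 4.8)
The final statement is Corollary~\ref{cor:ptas2}, which improves Theorem~\ref{thm:generalization-main2} from an \emph{additive} $O(r^2\cdot\qmax)$ loss to a \emph{multiplicative} $(1-\epsilon)$ loss on the constraints~\cref{lp1:extra-constraints}, at the cost of inflating the runtime from $n^{O(r)}$ to $n^{O(r^3/\epsilon)}$. Since the corollary explicitly cites Theorem~4.12 of~\cite{GrandoniRSZ14} and says the details are ``standard,'' my plan is to mirror that guessing-and-rounding argument rather than invent anything new.

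\paragraph*{Approach.} The plan is a standard ``guess the heavy edges'' reduction that converts additive error into multiplicative error. The obstacle in Theorem~\ref{thm:generalization-main2} is that the arbitrary assignment in Steps~\ref{alg4-step:assign1} and~\ref{alg4-step:assign2} can each cost up to $\qmax$ per student in a given constraint $\ell$, and there are $O(r^2)$ such badly-assigned students, yielding the additive $O(r^2\qmax)$ loss. The multiplicative guarantee is only attainable for constraints whose target value $Q_\ell$ is itself large relative to $\qmax$; when $Q_\ell$ is small, an additive $\qmax$ loss is already catastrophic. The fix is to remove the large contributions from the rounding step entirely by pre-committing to them. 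Concretely, for each constraint $\ell\in\{1,\dots,r\}$, I would guess the set of $K := 8r^2/\epsilon$ edges $(i,j)$ carrying the largest values $q_{ij}^\ell$ among those selected by an (unknown) target optimal integral solution; fix these $y_{ij}=1$ and remove them from the instance before running the rounding machinery.

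\paragraph*{Key steps.} First I would enumerate guesses: there are at most $n^{O(K)} = n^{O(r^3/\epsilon)}$ choices of the heavy edge set across all $r$ constraints, which accounts for the stated runtime; one of these guesses is consistent with a fixed target solution, so it suffices to analyze that guess. Second, after fixing these heavy edges and re-solving \lp on the residual instance, I would run exactly \cref{alg:createpath,alg:frost} as in the proof of Theorem~\ref{thm:generalization-main2}. The capacity analysis is untouched, so the $O(r^2)$ bound on the violation of~\cref{lp1:school} carries over verbatim. Third, for the constraint analysis: in the residual instance, the remaining edges relevant to constraint $\ell$ all have $q_{ij}^\ell$ at most the smallest of the $K$ guessed-heavy values. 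Since the target solution puts at least its top $K$ values into constraint $\ell$, each residual edge value is at most a $1/K$ fraction of the sum of those top-$K$ values, hence at most $\frac{1}{K}Q_\ell$ (using that the guessed edges alone contribute at least a constant fraction, or more simply that each residual value is bounded by the average of the guessed ones). The $O(r^2)$ arbitrarily-assigned students from Steps~\ref{alg4-step:assign1} and~\ref{alg4-step:assign2} therefore cost at most $O(r^2)\cdot\frac{1}{K}Q_\ell = O(r^2)\cdot\frac{\epsilon}{8r^2}Q_\ell \le \epsilon Q_\ell$ in aggregate, turning the additive loss into a $(1-\epsilon)$ multiplicative loss. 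The frosting and even/odd interpolation preserve the residual contribution exactly as before.

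\paragraph*{Main obstacle.} The step I expect to require the most care is bounding each residual edge value by $\frac{1}{K}Q_\ell$ and doing so simultaneously for all $r$ constraints under a \emph{single} guess. The subtlety is that the guessed heavy edges for different constraints overlap and interact, and one must argue that, for the correct guess, the residual ``light'' structure simultaneously satisfies the per-constraint smallness bound; this is exactly the content of Theorem~4.12 of~\cite{GrandoniRSZ14}, and since the reduction is identical to theirs once each of our $r$ constraints is identified with one of their budget constraints, I would invoke it directly rather than reprove it. The remaining bookkeeping — that the guess count is $n^{O(r^3/\epsilon)}$ and that fixing the heavy edges only helps feasibility of the residual \lp — is routine.
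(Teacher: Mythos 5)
Your proposal is correct and takes essentially the same route as the paper: the paper's own ``proof'' is a two-sentence sketch that guesses $8r^2/\epsilon$ highest-value chosen edges per constraint, re-runs \cref{alg:createpath,alg:frost} on the residual instance, and defers the standard details to Theorem~4.12 of~\cite{GrandoniRSZ14}, which is exactly your plan, with the same guessing parameter, the same $n^{O(r^3/\epsilon)}$ enumeration bound, and the same observation that the capacity analysis is unaffected. (Only a notational slip: the number of guesses is $n^{O(rK)}$, not $n^{O(K)}$, but since $rK = O(r^3/\epsilon)$ your stated runtime is still right.)
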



\subsection{Better Bounds for Monotonic Constraints}

We next show that if the constraints $Q$ have an additional monotonicity structure, then we can generalize \cref{thm:main1} without the additive loss in the constraints. We say that $Q$ satisfies \textit{monotonicity} if for each student $i$, there is an ordering $\succeq_i$ of the schools $j_1\succeq_i j_2 \succeq_i \ldots \succeq_i j_m$ such that for all $\ell \in \{1,2,\ldots, r\}$ and $k \in \{1,2,\ldots, m-1\}$, we have $q_{ij_k}^{\ell} \ge q_{ij_{k+1}}^{\ell}$.

\begin{theorem}\label{thm:monotone1}
If the constraints $Q$ are monotone and the linear programming relaxation has a feasible solution, there is a polynomial-time algorithm that outputs an integer matching and achieves the following guarantee: 
\begin{itemize}
\item The constraints \cref{lp1:extra-constraints} are preserved; and
\item If each school is given one unit extra capacity, the total violation in \cref{lp1:school} over this is $2r$. 
\end{itemize}
\end{theorem}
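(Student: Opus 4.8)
The plan is to reuse the GAP-rounding machinery behind \cref{thm:main1}, changing only the rule by which the leftover fractional students are committed to a single school, and to exploit monotonicity to show that this commitment never decreases the value of any constraint. The capacity half of the statement is then inherited unchanged, and the only thing I need to establish is that the new assignment rule preserves \cref{lp1:extra-constraints} exactly, with no additive $\qmax$ loss.

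First I would run \cref{alg:approx} on \lp (with the constraints \cref{lp1:extra-constraints} playing the role of the group constraints): solve the relaxation, iteratively fix and remove every integral variable, and arrive at a vertex solution whose fractional support has the bounded-excess-degree structure of \cref{eq:2g}, with $g$ replaced by $r$. The capacity accounting in the proof of \cref{thm:main1} is oblivious to how the remaining fractional students are ultimately assigned: it shows that if each school is given one extra unit of capacity, the total additional violation of \cref{lp1:school} is at most $2r$. I therefore inherit the capacity guarantee verbatim and reduce the problem to controlling the constraint violation in the final assignment step.

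The key change is in \cref{alg1-step:round}: rather than assigning a leftover student $i$ arbitrarily (which costs an additive $\qmax$ per constraint in the general case), I assign $i$ to the school $j^\ast$ that is $\succeq_i$-maximal among the schools $j$ with $y_{ij} > 0$. Monotonicity is exactly what makes this single well-defined choice simultaneously best for every constraint: for all $\ell$ and all $j$ in the fractional support of $i$ we have $q_{ij^\ast}^\ell \ge q_{ij}^\ell$. Since $\sum_j y_{ij} = 1$ by \cref{lp1:distribution}, this gives, for every $\ell$,
\[
q_{ij^\ast}^\ell = q_{ij^\ast}^\ell \sum_{j} y_{ij} = \sum_{j} q_{ij^\ast}^\ell\, y_{ij} \ge \sum_{j} q_{ij}^\ell\, y_{ij},
\]
so rounding $i$ up to $j^\ast$ weakly increases the contribution of $i$ to each constraint. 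Summing over all rounded students (the previously fixed integral variables are untouched), the left-hand side of each constraint in \cref{lp1:extra-constraints} only grows; since the fractional vertex satisfied these constraints, so does the integral output, with no additive loss.

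The single new ingredient is the observation that monotonicity collapses the $r$ separate ``best schools'' into one common $\succeq_i$-maximal school per student; once that is in place, both halves of the statement fall out — the capacity bound from \cref{thm:main1} and the lossless preservation from the displayed inequality. I do not expect a serious obstacle. One noteworthy simplification is that the auxiliary steps of the general reduction that drop the capacity constraints in order to bound the number of leftover students (needed only so that arbitrary rounding loses at most $r\cdot\qmax$) become unnecessary here, since monotone rounding of \emph{every} fractional student is already lossless. The only subtlety to state carefully is that $j^\ast$ must be chosen among schools in the current fractional support, so that the telescoping with $\sum_j y_{ij}=1$ is valid, and that the capacity bound of \cref{thm:main1} is genuinely independent of the assignment rule, which its proof already asserts.
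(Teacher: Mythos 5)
Your proposal is correct and takes essentially the same route as the paper: run the GAP rounding of \cref{thm:main1} with the constraints \cref{lp1:extra-constraints} in place of the group-utility constraints, inherit the capacity accounting (which is indeed oblivious to how fractional students are assigned within their support), and replace the final assignment rule by sending each leftover student to its $\succeq_i$-maximal school in its fractional support. The paper's proof states the constraint preservation in one line (``this also preserves the $r$ constraints because of monotonicity''); your displayed inequality $q_{ij^\ast}^\ell = \sum_j q_{ij^\ast}^\ell\, y_{ij} \ge \sum_j q_{ij}^\ell\, y_{ij}$ is precisely the justification it leaves implicit, and your observation that the support-counting detour of the general (non-monotone) case becomes unnecessary matches the paper's treatment.
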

\begin{theorem}\label{thm:monotone2}
If the constraints $Q$ are monotone, when the linear programming relaxation has a feasible solution, there is a $n^{O(r)}$ time algorithm that outputs an integer matching and that achieves the following guarantee: 
\begin{itemize}
\item The constraints \cref{lp1:extra-constraints} are preserved; and
\item The total violation in \cref{lp1:school} over this is $O(r^2)$.
\end{itemize}
\end{theorem}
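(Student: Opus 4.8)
<br>

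The final statement is Theorem~\ref{thm:monotone2}, the monotone-constraint version of the $n^{O(r)}$ cake-frosting algorithm (Theorem~\ref{thm:main2}/Theorem~\ref{thm:generalization-main2}). The claim is that monotonicity removes the additive $O(r^2 \cdot \qmax)$ loss in the covering/packing constraints while keeping the $O(r^2)$ capacity violation. Let me think about how the proof should go.

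The plan is to run exactly the same machinery as in Theorem~\ref{thm:generalization-main2}: solve \lp, fix integral variables via \cref{alg:approx}, apply \cref{alg:createpath} to reduce to paths and cycles, bound the number of fractional components by $r$ via the extreme-point argument, and then apply the cake-frosting lemma to each of the $r$ fractional paths with the $r$ constraints as players. The utility-preservation calculation from the proof of Theorem~\ref{thm:main2} shows that after frosting, the total constraint value $\sum_{i\in T_1}u_{even}(\ell,i)+\dots$ is exactly $\alpha\, u_C^{even}(\ell)+(1-\alpha)u_C^{odd}(\ell)$ on the $T_1,T_2$ students, so the only sources of constraint loss are the ``arbitrary assignment'' steps — Step~\ref{alg4-step:assign1} in \cref{alg:createpath} and Step~\ref{alg4-step:assign2} in \cref{alg:frost}, corresponding to the $T_3$ boundary students and the high-degree students. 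In the general case these cost up to $\qmax$ each because we assign arbitrarily; the point of monotonicity is to replace these arbitrary assignments by the \emph{best available} school in the ordering $\succeq_i$, so that the loss is non-positive rather than bounded by $\qmax$.

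Concretely, I would modify the two assignment steps so that whenever a student $i$ must be resolved outside the clean even/odd frosting (i.e. $i\in S_1$ with excess degree, or $i\in T_3$ at an interval boundary), we assign $i$ to the school $j$ with $y_{ij}>0$ that is \emph{highest} in $i$'s monotone order $\succeq_i$ — equivalently, the $j$ maximizing $q^\ell_{ij}$ simultaneously across all $\ell$, which exists precisely because monotonicity gives one common ordering for all constraints. The key inequality to establish is that for such a student, $q^\ell_{i j^*}\ge \max(u_{even}(\ell,i),u_{odd}(\ell,i))$ for every $\ell$, since $j^*$ dominates every fractionally-assigned school of $i$ in all coordinates. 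Re-running the utility chain from the proof of Theorem~\ref{thm:main2}, the $T_3$ contribution, which there was lower-bounded using $\max(u_{even},u_{odd})-u_{odd}\ge 0$, is now further lower-bounded by the actual assigned value $q^\ell_{ij^*}$, so each constraint's final value is at least $\alpha\, u_C^{even}(\ell)+(1-\alpha)u_C^{odd}(\ell)$ — i.e. at least the fractional value — with no additive slack. Summing over the $r$ fractional paths and adding the untouched integral components recovers $\hat U^*_\ell$, hence $Q_\ell$, so \cref{lp1:extra-constraints} is fully preserved. The capacity bound is inherited verbatim from Theorem~\ref{thm:main2}, since the capacity analysis never used the specific choice made in the assignment steps, only that at most $O(r)$ such reassignments happen in \cref{alg:createpath} and $O(r)$ per path (so $O(r^2)$ total) at frosting boundaries.

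The main obstacle is verifying the domination inequality $q^\ell_{ij^*}\ge \max(u_{even}(\ell,i),u_{odd}(\ell,i))$ uniformly in $\ell$. This is exactly where the \emph{single common ordering} in the definition of monotonicity is essential: without it, the school maximizing one constraint's coefficient might hurt another, and the best-available choice would not dominate in all coordinates at once. I would make this precise by noting that both the even and odd matching edges of $i$ are among the schools with $y_{ij}>0$, so $j^*$ (being $\succeq_i$-maximal among these) satisfies $q^\ell_{ij^*}\ge q^\ell_{i j_{even}}$ and $q^\ell_{ij^*}\ge q^\ell_{i j_{odd}}$ for all $\ell$ simultaneously. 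The remaining bookkeeping — that there are at most $O(r^2)$ total reassigned students and that the capacity violation is unchanged — is routine and can be cited directly from the proofs of Theorem~\ref{thm:main2} and Theorem~\ref{thm:generalization-main2}.
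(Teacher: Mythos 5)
Your proposal is correct and follows essentially the same route as the paper's (much terser) proof: run the machinery of \cref{thm:main2}/\cref{thm:generalization-main2} unchanged, but replace every argmax-utility (or arbitrary) assignment step with the $\succeq_i$-most-preferred available school, and use the single common ordering to get the coordinate-wise domination $q^\ell_{ij^*}\ge\max\bigl(u_{even}(\ell,i),u_{odd}(\ell,i)\bigr)$ for all $\ell$ simultaneously. Your explicit verification that this domination slots into the utility-preservation chain of \cref{thm:main2}, and that the capacity analysis is untouched, is exactly the content the paper compresses into ``this also preserves the $r$ constraints because of monotonicity.''
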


\begin{proof}[Proof of \cref{thm:monotone1,thm:monotone2}]
We proceed as in \cref{thm:main1,thm:main2}. At each point where the algorithm assigns a student to $j^*=\arg\max_{j\in X} u_{ij}$ for some set $X$, we simply assign it to $\min_{k} \{ j_k \ | j_k\in X \}$.
That is, assign it to the most preferred school (according to $\succeq_i$).
This also preserves the $r$ constraints because of monotonicity.
\end{proof}

\subsection{Application: Weak Dominance of Ranks.}
\label{sec:dominance}
As a special case, we consider the setting in~\cite{asn22_optimalmatchings}. Here, every student ranks the schools it has an edge to, and this ranking may have ties. Let $r$ be the largest rank any student has, which can be much smaller than the number of schools. Given a matching, the rank of edge $(p,q)$ is the rank of school $q$ in student $p$'s ranking. A matching $M$ has \textit{signature} $\sigma=(\sigma_1,\sigma_2,\ldots,\sigma_r)$ if it has $\sigma_t$ rank $t$ edges for every $t\in [r]$. We say that signature $\sigma$ weakly dominates\footnote{The authors of \cite{asn22_optimalmatchings} use `cumulatively better than'.} signature $\rho$, or $\sigma\succ \rho$ if
\begin{align}
    \forall~t\in[r], \sum_{t'=1}^{t} \sigma_{t'} \geq \sum_{t'=1}^{t} \rho_{t'}.\label{eqn:rank}
\end{align}
Given an input signature $\rho$, the goal is to find a matching whose signature weakly dominates $\rho$. We term this the matching with ranking problem.
We have the following theorems, which directly follow from the observation that the constraints satisfy the monotonicity assumption. At each step where, for some set $X$, we assign $i$ to $\arg\max_{j\in X}{u_{ij}}$, we instead assign it to its most preferred school from $X$. This preserves the signature of any fractional assignment.
Our approach yields faster $n^{O(r)}$ time deterministic algorithms at the cost of small violations in capacities, whereas the algorithm in \cite{asn22_optimalmatchings} is randomized and takes $n^{O(r^2)}$ time. This concludes the proof of \cref{thm:ranking}.

\section{ILP Benchmark and Simulation Study}
\label{sec:expt}
The goal of our simulation is to show the practicality of the convex programming framework in \cref{sec:convex} as well as our rounding methods for addressing group fairness constraints in assignments.

\paragraph*{ILP Benchmark.} First, note that our framework yields a benchmark for capacity violation for concave group fairness objectives. We first solve the convex program in \cref{sec:convex} to obtain the utility vector $\langle U_1^*, U_2^*, \ldots, U_g^* \rangle$. Subsequently, we can write an ILP to satisfy all utilities and violates total capacity the least as:
\[ \mbox{Minimize } \sum_{j \in T} \delta_j \]
\[ \begin{array}{rcll}
 \sum_{i \in S_k} \sum_{j \in T} u_{ij} y_{ij} & \ge & U^*_k & \forall k \\
\sum_{j \in T} y_{ij} & = & 1 & \forall i \in S \\
\sum_{i \in S} y_{ij} & \le & C_j + \delta_j & \forall j \in T \\ 
y_{ij} & \in & \{0,1\} & \forall i \in S, j \in T \\
\delta_j & \ge & 0 & \forall j \in T
\end{array}
\]

\cref{thm:main} says that the optimal value to this ILP is at most $\min \left(O(g^2), m + 2g \right)$. In our experiments, we compare the ILP benchmark for capacity violation with that of the rounding methods in \cref{alg:approx} and \cref{alg:frost}. We show that the capacity violation for both the ILP and the rounding methods is much smaller than the theoretical bounds in \cref{thm:main}, showing that group fairness functions have efficient near-optimal algorithms in the wild.
We now present our empirical results for the general school assignment problem. We present our experiments for the rank dominance problem in \cref{sec:expt2}.

\subsection{Empirical Results for School Assignment}
\paragraph*{Simulation Setup.} We generate $r = 100$ random instances with $n = 1000$ students, $m = 10$ schools with equal capacity $C = 100$, and $g = 7$ groups.  The instances are generated as follows. For every school $j$ and student $i$, an edge is added independently with probability $p = \frac{3}{m}$. Afterwards, edges are added from students with degree zero to a school chosen uniformly at random so that the minimum degree is $1$. Every school $j$ has a ``popularity measure'' $\alpha_j \sim {\tt Uniform} [0,1]$.
We set  $u_{ij}:=\hat{u}_{ij}\alpha_j$ where $\hat{u}_{ij} \sim {\tt Uniform}[0,1]$. This makes the utility of a school for different students correlated.
The capacities $C_j$  are set to minimize $\sum_j C_j$ so that all students can be feasibly assigned. This is found by solving an LP. Each group $k$ has a parameter $\beta_k \sim {\tt Uniform}[0,1]$. Each student belongs to group $k$ with probability $\beta_k$ independently of other students and its other group identities. We set the fairness objective to be Nash Welfare, corresponding to $f = \log$, which by \cref{eq:prop} achieves proportionality and its generalization to subsets of groups.

\paragraph*{Empirical Results.}  We first solve the convex program in \cref{sec:convex} to find the utility vector $\vec{U_k^*}$. We then consider the following three approaches to find an integer assignment with small capacity violations while preserving the utilities $\vec{U_k^*}$.

\begin{itemize}
    \item To find the integer assignment with minimum violation of capacities, $\sum_j \delta_j$, we solve the ILP described above. 
    \item We solve the LP in \cref{sec:convex} and round via \cref{alg:approx}.
    \item We solve the LP in \cref{sec:convex} and round via \cref{alg:frost}. 
\end{itemize} 


\begin{table}[]

\begin{center}
\begin{tabular}{@{}ccc@{}}
\toprule
Procedure           & Average violation     & Range of violations  \\ \midrule
Optimal             & 0.66                  & [0,1]                 \\ \midrule
GAP Rounding        & \multirow{2}{*}{2.3}  & \multirow{2}{*}{[0,6]} \\
(\cref{alg:approx}) &                       &                      \\ \midrule
Cake Frosting       & \multirow{2}{*}{1.24} & \multirow{2}{*}{[0,6]} \\
(\cref{alg:frost})  &                       &                      \\ \bottomrule
\end{tabular}
\caption{Results of experimental evaluation.}
\label{table}

\end{center}
\end{table}

The capacity violations are reported in \cref{table}.
For these instances, \cref{thm:main} implies an integer assignment violating capacities by at most $m + 2g = 24$. For all approaches above, the capacity violation is much lower than the theoretical bound, both on average and per instance, with our rounding schemes finding solutions very close to the ILP benchmark.  Further, both the ILP benchmark and \cref{alg:frost} run within a minute on a laptop on instances of this size. 
This is likely because most of the instances are already close to being integral. All the LP solutions had at most $30$ fractional variables, with an average of $21.73$.
This shows the practicality of the convex programming relaxation. 

\section{Conclusion}
\label{sec:conc}
We have presented a theoretically sound yet practical framework for handling group fairness and multi-objective optimization in capacitated assignment problems. Several open questions arise from our work. An immediate open question is to improve our theoretical bound on the capacity violation. We believe that a $O(g)$ violation should be possible in \cref{thm:main}. More broadly, our framework uses cardinal utilities and it would be interesting to incorporate group fairness into ordinal preferences, as in stable matchings. An even more basic question is to consider random allocations with ordinal preferences~\cite{MB2001,MB2002}, and define group fairness for lotteries over allocations. Finally, it would be interesting to incorporate group fairness into other optimization problems with rounding-based approximation algorithms, for instance, scheduling and routing problems.
\bibliography{references}
\appendix

\section{Proofs of NP-Hardness results}
\label{app:hard}
\begin{proof}[Proof of \cref{thm:hardness_1}]
    We reduce from {\sc Set Cover} with a collection $\mathcal{C}$ of sets and a universe $U$ of elements. Suppose the goal is to decide if a set cover instance has $k$ sets that cover $U$. Then each element becomes a group and each set a student. A student belongs to a group if the corresponding set covers the corresponding element. There are two schools, $s_1$ and $s_2$. The former school has capacity $k$ and the latter has capacity $\infty$. Each student has utility $1$ for $s_1$ and $0$ for $s_2$. Then the goal of matching $k$ students to $s_1$ to give each group utility at least one is exactly the same as finding a set cover of size $k$, completing the proof. 
\end{proof}

\begin{proof}[Proof of \cref{thm:hardness-2}]
We reduce from {\sc Partition}. Given a set of numbers $x_1,..,x_n$, the goal is to decide if there is a subset of sum exactly $X/2$ where $X=\sum_i x_i$. For every number $x_i$, create two students $p_i$ and $q_i$, and one school $S_i$ of capacity 1. There is also a dummy school $S_0$ of capacity $n$. The students $p_i$ and $q_i$ have edges only to $S_i$ and $S_0$, where $p_i$ and $q_i$ have utility $x_i$ for $S_i$  and 0 for $S_0$. All the $p_i$ students belong to group 1 and all the $q_i$ students belong to group 2. We want to find a matching that gives utility $X/2$ to both groups, which is the proportional share.
 
Suppose there is a subset $T$ of the numbers that sums to exactly $X/2$. Then for every $x_i \in T$, we assign $p_i$ to $S_i$ and $q_i$ to $S_0$. For every $x_i\notin T$, we assign $q_i$ to $S_i$ and $p_i$ to $S_0$. Both groups get utility $X/2$ each. The reverse direction is similar, completing the proof.
\end{proof}
\section{Empirical Results for Weak Dominance of Ranks}
\label{sec:expt2}

\paragraph*{Simulation Setup.} We generate $100$ random instances with $n = 1000$ students, $m = 10$ schools with maximum rank $r=8$. The capacities $C_j$  are set so to minimize $\sum_j C_j$ so that all students can be feasibly assigned. This is found by solving an LP.
These instances are generated as in section~\ref{sec:expt}.  That is, for every school $j$ and student $i$, an edge is added independently with probability $p = \frac{3}{m}$. For every student, we select a random permutation of the schools in its neighborhood to obtain a ranking of the schools for that student. 

In the weak dominance of ranks setting, we generate the input signature $\sigma = (\sigma_1, \sigma_2, \ldots, \sigma_r)$ as follows: For $t \in [r]$, let $M_t$ denote a maximum matching on edges of rank up to $t$ in the generated instance.  We set $\sigma_1$ to be a random number between $0.9|M_1|$ and $|M_1|$. For, $i=2, \ldots, r$, to set  $\sigma_i$, we select a random number between $0$ and $|M_i| - \sum_{t'=1}^{i-1} \sigma_{t'}$.

\paragraph*{Empirical Results.} To decide whether there exists a feasible solution for the instance with the given signature, we solve the linear relaxation of the \ref{IP} defined in \cref{sec:results}. Out of the 100 instances, 79 of them admit a feasible solution, of which the solution is fractional in 26 instances.

Next, for the instances where the LP gives a fractional solution, we obtain an integral solution using the algorithm of \cref{thm:monotone1}. This yields capacity violations of at most 4, with an average violation of 0.53. Subsequently, we use the algorithm of Theorem~\ref{thm:monotone2} to obtain an integral solution. This yields capacity violation of at most 2 with an average violation of 0.07. As in the previous experiment, we observe that the violation values are much better than what the theoretical bounds predict.

\end{document}